\documentclass[journal]{IEEEtran}
%
\usepackage{cite}  
\bibliographystyle{unsrt}
\usepackage{color}
\usepackage{algorithm} 
\usepackage{algorithmic} 
\usepackage{multirow} 
\usepackage{amsmath} 
\usepackage{xcolor}

\usepackage{graphicx}
\usepackage{epstopdf}

\usepackage{amsfonts}
\usepackage{stfloats}
\usepackage{amsthm,amsmath,amssymb}
\usepackage{mathrsfs}
\usepackage{hyperref}

\newtheorem{lemma}{Lemma}
\newtheorem{proposition}{Proposition}


%

%

%
\ifCLASSINFOpdf
\else
\fi
\hyphenation{op-tical net-works semi-conduc-tor}

\begin{document}
%
\title{A Joint Power Splitting, Active and Passive Beamforming Optimization Framework for IRS Assisted MIMO SWIPT System}
%
%
%

\author{Chen He,~\IEEEmembership{Member,~IEEE,}
        Xie Xie, 
        Kun Yang,~\IEEEmembership{Senior Member,~IEEE,}
        and Z. Jane Wang,~\IEEEmembership{Fellow,~IEEE}
\thanks{Chen He and Xie Xie are with the School of Information Science and Technology, Northwest University, Xi'an, 710069, China. Corresponding author: Chen He (email: chenhe@nwu.edu.cn).}
\thanks{Kun Yang is with the School of Information Science and Technology, Northwest University, Xi'an, 710069, China, and he is also with School of Computer Science \& Electronic Engineering, University of Essex, Wivenhoe Park, Colchester, Essex CO4 3SQ, United Kingdom.}
\thanks{Z. Jane Wang is with Department of Electrical and Computer Engineering, The University of British Columbia, Vancouver, BC V6T1Z4, Canada.}
}

\maketitle

\begin{abstract}
This paper considers an intelligent reflecting surface (IRS) assisted multi-input multi-output (MIMO) power splitting (PS) based simultaneous wireless information and power transfer (SWIPT) system with multiple PS receivers (PSRs). 
The objective is to maximize the achievable data rate of the system by jointly optimizing the PS ratios at the PSRs, the active transmit beamforming (ATB) at the access point (AP), and the passive reflective beamforming (PRB) at the IRS, while the constraints on maximum transmission power at the AP, the reflective phase shift of each element at the IRS, the individual minimum harvested energy requirement of each PSR, and the domain of PS ratio of each PSR are all satisfied. 
For this unsolved problem, however, since the optimization variables are intricately coupled and the constraints are conflicting, the formulated problem is non-convex, and cannot be addressed by employing exist approaches directly.  
To this end, we propose a joint optimization framework to solve this problem. Particularly, we reformulate it as an equivalent form by employing the Lagrangian dual transform and the fractional programming transform, and decompose the transformed problem into several sub-problems. Then, we propose an alternate optimization algorithm by capitalizing on the dual sub-gradient method, the successive convex approximation method, and the penalty-based majorization-minimization approach, to solve the sub-problems iteratively, and obtain the optimal solutions in nearly closed-forms.
Numerical simulation results verify the effectiveness of the IRS in SWIPT system and indicate that the proposed algorithm offers a substantial performance gain.
\end{abstract}

\begin{IEEEkeywords}
Intelligent reflecting surfac, Simultaneous wireless information and power transfer, Power splitting, Joint optimization, Lagrangian dual transform.
\end{IEEEkeywords}

%
\IEEEpeerreviewmaketitle

\section{Introduction}
%
%
%
%
Radio frequency (RF) signals can transfer both information and energy simultaneously, which makes it possible to combine wireless power transmit (WPT) and wireless information transmit (WIT) \cite{8476159}. Motivated by this fact, currently, simultaneous wireless information and power transfer (SWIPT) \cite{9145622} has been regarded as an appealing technology for the internet of things (IOT) with low-power and energy-limited devices \cite{6957150}. However, SWIPT also brings new challenges on the trade-off of information decoding (ID) and energy harvesting (EH) operations, which needs to be addressed in the SWIPT system \cite{8214104,8694785,8879665}. In SWIPT system ID receivers and EH receivers can be separated or integrated \cite{funda,8421584}. For the separated architecture, the ID receivers and the EH receivers are different devices. The former is only able to process information and the latter is only for power charging. In contrast, for the integrated architecture, each receiver has integrated circuitry to perform both ID and EH operations, 
by power splitting (PS) \cite{2014Joint} or time switching (TS). In particular, PS receivers (PSRs) split the received signal into two streams of different power, one for ID operation and the other for EH, and TS receivers (TSRs) split the received signal in time domain.

Compared with the conventional receivers, the receivers in SWIPT is more limited by the received energy strength, as a portion of the power must be used to active the circuit. This leads to a performance bottleneck in SWIPT system.
MIMO technology can significantly improve the performance of the SWIPT systems by providing increased link capacity and spectral efficiency \cite{7009979,7862926}. Employing relays is another way to enhance the performance of the SWIPT systems, which can introduce additional links to strengthen the received signal power at the receivers.

Remarkably, there is another promising and cost-effective technique, namely intelligent reflecting surface (IRS), to improve the spectral and energy efficiency of wireless communications systems, due to the fact that IRS can introduce several reflective channels to mitigate detrimental propagation environment. 
In general, an IRS can adjust the wireless propagation environment nearly-instantaneously, by using a vast number of nearly-passive reflective elements, each of which can change the phase of the signals without introducing additional noise \cite{9146875,9140329,di2019smart,9181610}. 
 Optimal PRB design \cite{9117093,9239335,9293155}, which can maximize the achievable data rate, is the major challenge for the IRS-assisted communications systems.
The work \cite{9025235} considered a two-way communications with the IRS and maximized the data rate by employing the Arimoto-Blahut algorithm. 
The authors in \cite{8982186} studied the sum-rate maximization problem under MISO setting and proposed an efficient algorithm based on the vector versions of Lagrangian dual transform (LDT) and fractional programming transform (FPT) \cite{shen2018fractional, shen2018fractional2,8862850}. 
The work \cite{9154244} employed the same vector versions of LDT and FPT as those in \cite{8982186} to solve this sum-rate maximization problem in the more general and complex cell-free scenario. 
In \cite{9076830}, by proposing an majorization-minimization (MM) based \cite{MM} algorithm, the authors maximized the sum-rate of all groups for an IRS-assisted multi-group multi-cast communications system.
Moreover, the authors in \cite{panmulticell} and \cite{9279253} considered the sum-rate maximization problem in multi-cell multi-IRS scenario. 
Particularly, the authors in \cite{panmulticell} employed the well-known weighted minimum mean-square error (WMMSE) \cite{5756489} technique to transform the original sum-rate maximization problem into an equivalent form, and solved it based on the block coordinate descent (BCD) technique, the Lagrangian multiplier method and the Complex Circle Manifold (CCM)/MM algorithms.
The authors in \cite{9279253} proposed an algorithm based on the second-order cone programming (SOCP) and semi-definite relaxation (SDR) \cite{luo2010semidefinite} techniques.
Furthermore, other wide ranges of topics in IRS-assisted communications system had also been studied, such as power allocation, physical layer security, non-orthogonal multiple access, device-to-device communication, and so on \cite{8743496,9316920,9417307,8955968,9348211}. 
Recently, deploying the IRS into the SWIPT system has been attracting increasing attentions. The authors in \cite{zhou2020user} considered deploying IRS to enhance the physical layer security in SWIPT system.
The authors in \cite{xu2020resource} proposed an efficient penalty-based algorithm developed through SCA and SDR approaches to optimize resource allocations in an IRS-assisted SWIPT system, in which a dedicated energy-carrying signal was required, and the EH receiver model was non-linear. 
Meanwhile, the authors in \cite{wsp} proved the dedicated energy signal can only slightly improve the performance, and provided a SDR-based algorithm.
Moreover, the authors in \cite{qos} investigated a transmission power minimization problem under quality-of-services (QOS) constraints, and proposed a penalty-based optimization method to solve it.
The achievable data rate maximization problem with separated ID and EH receivers was investigated in \cite{panswipt}, in which the authors transformed the data rate maximization problem as an equivalent WMMSE minimization problem by exploiting the equivalence between the data rate and WMMSE, and proposed an efficient iterative algorithm based on BCD and MM technologies.
The paper \cite{9257429} studied an IRS-aided SWIPT system with multiple integrated PSRs, and introduced an energy efficiency indicator (EEI) to balance the date rate and harvested energy. Then an efficient algorithm developed by SDR technique, MM algorithm and Dinkelbach approach, was proposed for addressing this problem.
In \cite{9423652}, the authors studied the max-min energy efficiency of the IRS-assisted SWIPT system, and proposed an efficient alternate optimization (AO) algorithm by capitalizing on the penalty-based method, SDR technique and MM approach.

Most exiting works on IRS-assisted MIMO SWIPT systems focus on minimizing the transmission power, or maximizing the received power, or maximizing energy efficiency \cite{zhou2020user,xu2020resource,wsp,qos,9257429,9423652}. For the work on IRS-assisted MIMO SWIPT system focusing on maximizing sum-rate, the receivers for ID and EH are separated devices \cite{panswipt}. 
In this paper, however, we consider the achievable data rate maximization problem in the IRS-assisted MIMO SWIPT system with multiple integrated PSRs, in which the receivers can perform both ID and EH. To our best knowledge, this problem has not been addressed yet, and is fundamentally different from the above mentioned problems in  IRS-assisted SWIPT systems. As we will see later, the optimization problem investigated in this paper is non-convex since the optimization variables are intricately coupled, and the constraints are conflicting, which impose a big challenge for solving the problem.

The main contributions of this paper are summarized as follows
\begin{itemize}
    \item In order to solve this non-convex and challenging optimization problem, we propose a joint optimization framework. Particularly, we employ the matrix versions of LDT and FPT, to transform the problem into an equivalent form, in which the optimization variables are decoupled.
    Given the fact that the objective function of transformed problem with respect to each variable is concave with the others being fixed, consequently, we decompose the transformed problem into several sub-problems. Then, we propose an AO-based algorithm to solve this sub-problems and obtain the solutions of all variables in nearly closed-forms. 
    We also prove that the finally solutions satisfy the Karush-Kuhn-Tucker (KKT) conditions of the original problem;
    
    \item At the PSRs side, we design of the optimal PSRs, i.e., the optimal PS ratios for the EH and ID operations, and the optimal decoding filters (matrices) for decoding the received signals. It is worth to mention here that when optimizing the PS ratio of each PSR for the EH and ID operations, each iteration will yield an upper bound of PS ratios, which can be used to check the feasibility of the problem;
    
    \item 
    At the AP side, first, by approximating minimum harvested energy requirement, which is a non-convex constraint, to a linear form, we reformulate the optimization sub-problem of ATB as a convex form.
    Then we provide a nearly closed-form solution of ATB by exploiting Lagrangian dual sub-gradient method and also prove that the final converged solution of ATB satisfies the KKT conditions of the original sub-problem;
    
    \item At the IRS side, we approximate the constraint of minimum harvested energy by adopting SCA approach and then transform the optimization sub-problem of PRB as a quadratic programming (QP) form under constant modulus constraints after further matrix manipulations. 
    Since the constant modulus constraint of each reflective element is non-convex, the QP problem is still non-convex, and the dual gap is not guaranteed to be zero. 
    To solve this problem with low-complexity, we propose an efficient penalty-based MM algorithm and obtain the solution of PRB in a nearly closed-form. 
    It can also be readily verified that the final converged solution satisfies the KKT conditions of the original sub-problem of optimizing PRB;
    
    \item Finally, simulation results indicate that the IRS can provide a significant performance gain, and the proposed algorithm can substantially enhance the performance. 
\end{itemize}
\textit{Notations:} Boldface low case letters denote vectors and upper case letters stand for matrices. ${\mathbb{C}^{M \times N}}$ represents a complex matrix with the dimensional of ${{M \times N}}$. $\mathbf I _N$ and $\mathbf 0$ denote the $N \times N$ identity matrix and zero matrix.
For two matrices $\mathbf A$ and $\mathbf B$, $\mathbf A \odot \mathbf B$ and $\mathbf A \otimes \mathbf B$ are the Hadamard product and Kronecker product of $\mathbf A$ and $\mathbf B$, respectively. For a square matrix $\mathbf A$, $\mathbf A^{\operatorname{H}}$, $\mathbf A^{\operatorname{T}}$ and $\operatorname{Tr} \left(\mathbf A\right)$ denote the Hermitian conjugate transpose, the transpose, and the trace of matrix $\mathbf A$, respectively.
$\operatorname{diag}\left( \cdot \right)$ denotes the diagonal operation and $\Re\left\{\cdot\right\}$ denotes the real part of a complex number.
Meanwhile, $\mathcal O$ denotes the computational complexity notation.

\section{system model and problem formulation}
\begin{figure}[t]
    \centering
    \includegraphics[width=0.95\linewidth]{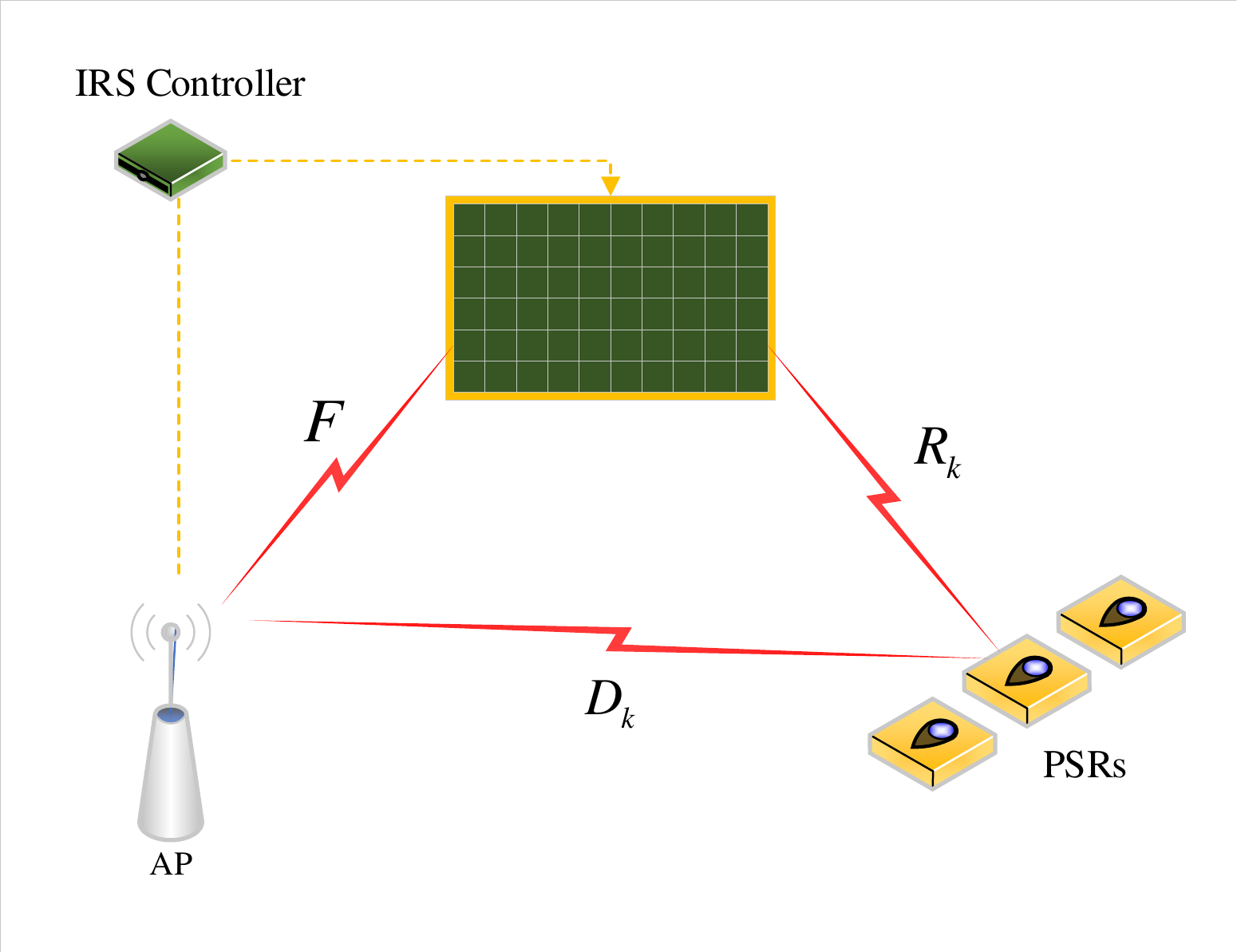}
    \caption{IRS-assisted PS-SWIPT System.}
 \label{f1} 
\end{figure}

In this section, as shown in Fig. \ref{f1}, we describe an IRS-assisted SWIPT system with multiple integrated PSRs. Meanwhile, it is assumed that the AP and the IRS controller can acquire the channel state information (CSI) perfectly \cite{9214532}.\footnote{The conventional channel estimation methods can directly applied for the IRS and there are others channel estimation technologies designed only for IRS system. However, in fact, the required CSI of IRS channels cannot be accurate, therefore, the robust optimization algorithm need to be proposed \cite{9117093}, which is left for future work.}
We define $M_b$ and $M_u$ as the number of the AP antennas and the PSRs antennas, respectively. 
Meanwhile, $N$ denotes the number of reflective elements of the IRS.

The signal at the AP, $\mathbf x$, is a combination of symbols intended to $K$ PSRs, which is given as
$\mathbf x = \sum_{k = 1}^K {{\mathbf W _k}{\mathbf s_k}}$, where $\mathbf s_k \in \mathbb{C}^{M_u \times 1}$ is the data symbol vector, and satisfies $\mathcal C\mathcal N\left(\mathbf 0, \mathbf I_{M_u}\right)$, and $\mathbf W _{k} \in {\mathbb{C}^{M_b \times M_u}}$ denotes the corresponding ATB matrix for the $k^{th}$ PSR.   

The received signal at the $k^{th}$ PSR can be expressed as
\begin{align}
    {\mathbf y_k} = {\mathbf H_k^{\operatorname{H}}}{\mathbf W_k}{\mathbf s_k} + \sum_{i = 1,i \ne k}^K {{\mathbf H_k^{\operatorname{H}}}{\mathbf W_i}{\mathbf s_i}}  + {\mathbf n_k},\forall k,\label{e1}
\end{align}
where $\mathbf n_{k} \sim  \mathcal C \mathcal{N}\left( {\mathbf 0,{\sigma_k ^2\mathbf {I}_{M_u}}} \right)$ denotes the antenna noise vector at the $k^{th}$ PSR, and $\mathbf H_{k} \in \mathbb{C}^{M_b \times M_u}$ denotes the cascade channel from the AP to the $k^{th}$ PSR, and can be mathematically expressed as ${\mathbf H_k^{\operatorname{H}}} = {\mathbf D_k^{\operatorname{H}}} + {\mathbf R_k^{\operatorname{H}}}\Theta \mathbf F$, where ${\mathbf D_k \in {\mathbb{C}^{M_b \times M_u}}}$ denotes the direct channel from the AP to the $k^{th}$ PSR, ${\mathbf R_k \in \mathbb{C}^{N \times M_u}}$ is the channel from the IRS to the $k^{th}$ PSR and the channel from the AP to the IRS is represented by ${\mathbf F \in {\mathbb{C}^{N \times M_b}}}$. Since the signals impinging by the IRS and will be absorbed, which may lead to a power loss, thus, the PRB of IRS can be represented by 
\begin{align}
    \Theta  = \alpha\operatorname{diag}\left( {{e^{j{\phi_1}}},{e^{j{\phi_2}}},\cdots,{e^{j{\phi_N}}}} \right),\label{e2}
\end{align}
where $\phi_n \in [0,2\pi)$, $\forall n \in N$ is the phase shift of $n^{th}$ reflective element, and $\alpha$ denotes reflecting efficiency of IRS.
\footnote{Some previous works assume a more ideal model of IRS as $\Theta =  \operatorname{diag}\left({\varphi_1 {e^{j{\phi_1}}},{\varphi_2e^{j{\phi_2}}},\cdots,{\varphi_Ne^{j{\phi_N}}}} \right)$, where $\varphi_n$ denotes amplitude of $n^{th}$ element of IRS. In this ideal model, IRS can reconfigure the incident signal by changing its amplitude and phase shift. Clearly, the ideal IRS model can offer a higher adaptability to reconfigure the incident signals, however, it may also lead to a higher implementation complexity. The trade-off between performance and implementation complexity is a key problem, which is left for future work.}

\begin{figure}[t]
    \centering
    \includegraphics[width=0.95\linewidth]{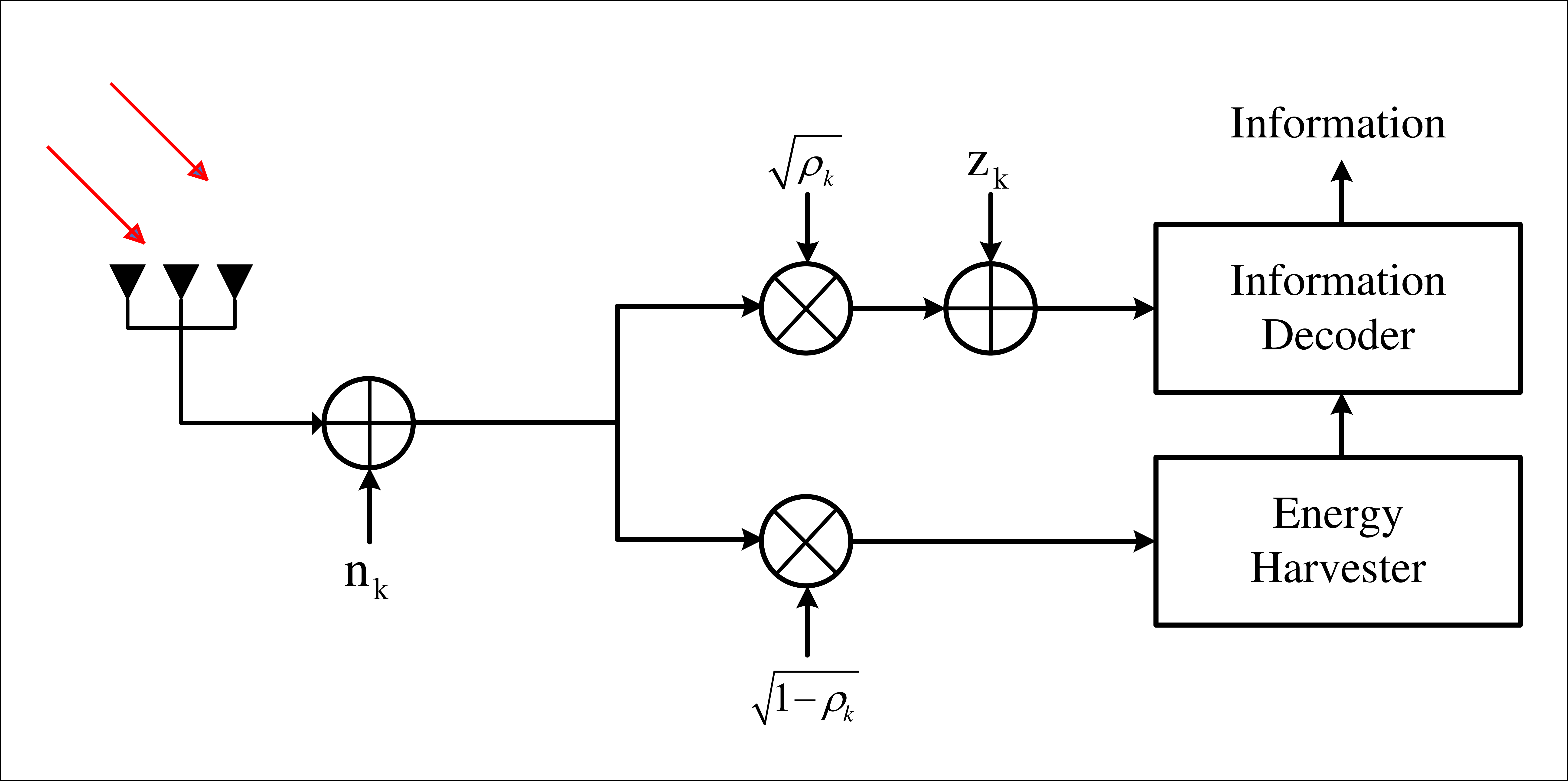}
    \caption{Architecture of Integrated PSR.}
 \label{integrated PSR} 
\end{figure}

In this paper, as shown in Fig. \ref{integrated PSR}, each PSR is an integration of an ID receiver and a EH receiver, and the received signal power is split into two streams by an adjustable PS ratio of $\rho_k\in (0,1)$, which is dedicated for ID operation, and $1-\rho_k$ is exploited for EH operation. Accordingly, the received signal power splits for ID operation can be expressed as
\begin{align}
    \mathbf y_k^{\operatorname{ID}} = \sqrt {{\rho _k}} \left( {\mathbf H_k^{\operatorname{H}}}{\mathbf W_k}{\mathbf s_k} + \sum_{i = 1,i \ne k}^K {{\mathbf H_k^{\operatorname{H}}}{\mathbf W_i}{\mathbf s_i}}  + {\mathbf n_k} \right) + {\mathbf z_k},\forall k,\label{e3}
\end{align}
where $\mathbf z_k\sim  \mathcal C \mathcal{N}\left( {\mathbf 0,{\delta_k ^2\mathbf {I}_{M_u}}} \right)$ is the additional noise generated during signal processing. Therefore, the signal to interference plus noise ratio (SINR) at the $k^{th}$ PSR can be formulated as
\begin{align}
    {\Gamma _k} = \frac{{{\rho _k}{\mathbf H_k^{\operatorname{H}}}{\mathbf W_k}\mathbf W_k^{\operatorname{H}}\mathbf H_k}}{{{\rho _k}\sum_{i = 1,i \ne k}^K {{\mathbf H_k^{\operatorname{H}}}{\mathbf W_i}\mathbf W_i^{\operatorname{H}}\mathbf H_k}  + {\rho _k}\sigma _k^2\mathbf I_{M_u} + \delta _k^2\mathbf I_{M_u}}},\forall k.\label{e4}
\end{align}
The received signal power splits for EH operation is given as 
\begin{align}
    \mathbf y_k^{\operatorname{EH}} = \sqrt {1 - {\rho _k}} \left( {\sum_{i = 1}^K{{\mathbf H_k^{\operatorname{H}}}{\mathbf W_i}{\mathbf s_i}}  + {\mathbf n_k}} \right),\forall k.\label{e5}
\end{align}
By ignoring the antenna noise power \cite{9423652}, i.e., $\sigma_k^2$, since it is negligible, the split power for EH can be expressed as
\begin{align}
    {\mathbf e_k} = {\eta _k}\left( {1 - {\rho _k}} \right)\operatorname{Tr}\left( \sum_{i = 1}^K {\mathbf H_k^{\operatorname{H}}}{\mathbf W_i}\mathbf W_i^{\operatorname{H}}\mathbf H_k \right),\forall k,\label{e6}
\end{align}
where $\eta _k \in \left(0, 1\right)$ is the energy conversion efficiency for the $k^{th}$ PSR. 

The achievable data rate of all the PSRs can be expressed as $\mathcal R\left( {\rho, \mathbf W,\Theta } \right) = \sum_{k = 1}^K {\log \left| {\mathbf I_{M_u} + {\Gamma _k}} \right|}$.
Accordingly, the achievable data rate maximization problem can be defined as
\begin{align}
\mathop {\max }_{\rho, \mathbf W,\Theta } \quad & \mathcal R\left( {\rho, \mathbf W, \Theta } \right) \label{e7}\\
\operatorname{s.t.}\quad &\sum_{k = 1}^K {\left\| {{\mathbf W_k}} \right\|_2^F \le {P_{\max }}}, \tag{7a} \label{4a}\\
\quad &\operatorname{Tr}\left( {\sum_{i = 1}^K {{\mathbf H_k^{\operatorname{H}}}{\mathbf W_i}\mathbf W_i^{\operatorname{H}} \mathbf H_k} } \right) \ge \frac{\mathbf {\bar e}_k}{{\eta _k}\left( {1 - {\rho _k}} \right)}, \forall k,\tag{7b} \label{4b}\\
\quad &\left| \Theta _{n, n} \right|  = \frac{1}{\alpha}, \forall n, \tag{7c} \label{4c}\\
\quad & 0 < {\rho _k} < 1, \forall k,\tag{7d} \label{4d}
\end{align}
where the constraint \eqref{4a} limits the maximum transmission power $P_{\max}$ of the AP, the constraints \eqref{4b} are imposed to guarantee the harvested power is larger than minimum EH requirement $\mathbf {\bar e}_k, \forall k$, and the constraint \eqref{4d} is the bound of the PS ratio of each PSR, and \eqref{4c} represents the constant modulus constraints of reflective elements at the IRS.

It is clear that Problem \ref{e7} is non-convex and intractable, due to the fact that all the variables are intricately coupled in a function form of logarithmic, and the minimum EH constraint of each PSR is also non-convex and conflicting with maximum transmission power constraint at the AP. 
Additionally, the constraints on the PS ratios impose another challenge for the optimization problem. To this end, in the next section, we provide an efficient joint optimization framework to solve the problem.

\section{Proposed Joint Optimization Framework}

First, we move the matrix-ratio terms out of the logarithm in the objective function of Problem \ref{e7} by employing the matrix version of LDT, and we have the following proposition

\begin{proposition}[Matrix version of LDT]
By introducing an auxiliary diagonal matrix $\mathbf U_k\in \mathbb{C}^{M_u \times M_u}$, $\forall k$, Problem \ref{e7} can be reformulated as an equivalent form
\begin{align} 
    \mathop {\max }_{\rho ,\mathbf W,\Theta ,\mathbf U} \quad & {f_1}\left( {\rho ,\mathbf W,\Theta ,\mathbf U} \right) \label{eq8}\\
    \operatorname{s.t.} \quad 
    & \eqref{4a}-\eqref{4d}, \notag 
\end{align}
where the new objective function can be written as
\begin{align}
    {f_1}\left( {\rho, \mathbf W,\Theta, \mathbf U} \right) &= \sum_{k=1}^K {\left( \log \left| {\mathbf I_{M_u} + {\mathbf U_k}} \right|  - \operatorname{Tr}\left( \mathbf U_k \right)\right)} \notag\\
    &+\sum_{k = 1}^K {\operatorname{Tr}\left(\left( {\mathbf I_{M_u} + {\mathbf U_k}} \right){f_2}\left( {\rho, \mathbf W,\Theta } \right)\right)},
\end{align}
where ${f_2}\left( {\rho,\mathbf W,\Theta} \right)$ is given by
\begin{align}
    {f_2}\left( {\rho,\mathbf W,\Theta} \right) = {\mathbf H_k^{\operatorname{H}}}{\mathbf W_k}\mathbf V_k^{ - 1}\mathbf W_k^{\operatorname{H}}\mathbf H_k,
\end{align}
with ${\mathbf V_k} = \sum_{i = 1}^K {{\mathbf H_k^{\operatorname{H}}}{\mathbf W_i}\mathbf W_i^{\operatorname{H}}\mathbf H_k}  + \sigma _k^2\mathbf I_{M_u} + {\delta _k^2}/{\rho _k}\mathbf I_{M_u}$, $\forall k$. 
\end{proposition}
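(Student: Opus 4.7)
The plan is to derive Proposition~1 by applying the matrix version of the Lagrangian dual transform (LDT) to each summand $\log|\mathbf{I}+\Gamma_k|$ in the objective of Problem~\eqref{e7}, then summing the resulting per-user identities over $k=1,\dots,K$. Because the auxiliary matrices $\mathbf{U}_k$ are brand-new variables entering only through $f_1$, the feasible set \eqref{4a}--\eqref{4d} is preserved verbatim, so equivalence of the two optimization problems reduces to equivalence of the two objectives once the inner maximization over $\mathbf{U}$ is carried out pointwise in $(\rho,\mathbf{W},\Theta)$.

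The core matrix identity I would establish first is, for each $k$,
\begin{align*}
\log|\mathbf{I}+\Gamma_k| = \max_{\mathbf{U}_k}\Bigl\{\log|\mathbf{I}+\mathbf{U}_k| - \operatorname{Tr}(\mathbf{U}_k) + \operatorname{Tr}\bigl((\mathbf{I}+\mathbf{U}_k)f_2\bigr)\Bigr\}.
\end{align*}
The verification proceeds by differentiating the bracketed expression in $\mathbf{U}_k$ and setting the gradient to zero, producing the first-order condition $(\mathbf{I}+\mathbf{U}_k)^{-1}-\mathbf{I}+f_2=\mathbf{0}$ with stationary point $\mathbf{U}_k^\star=(\mathbf{I}-f_2)^{-1}-\mathbf{I}$. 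Since $\log|\mathbf{I}+\mathbf{U}_k|$ is concave on $\mathbf{U}_k\succ-\mathbf{I}$ while the remaining two terms are linear in $\mathbf{U}_k$, $\mathbf{U}_k^\star$ is the unique global maximizer.

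The remaining task is to show that this maximal value collapses to $\log|\mathbf{I}+\Gamma_k|$. I would first verify $\mathbf{I}+\Gamma_k=\mathbf{V}_k\mathbf{V}_k'^{-1}$ with $\mathbf{V}_k'\triangleq \mathbf{V}_k-\mathbf{H}_k^{H}\mathbf{W}_k\mathbf{W}_k^{H}\mathbf{H}_k$, directly from the SINR expression in \eqref{e4} after cancelling $\rho_k$ from numerator and denominator and absorbing $\delta_k^2/\rho_k$ into $\mathbf{V}_k'$. The Sylvester determinant identity together with the Weinstein--Aronszajn formula $|\mathbf{I}-\mathbf{A}\mathbf{B}|=|\mathbf{I}-\mathbf{B}\mathbf{A}|$ then deliver
\begin{align*}
|\mathbf{I}+\Gamma_k|\cdot|\mathbf{I}-f_2|=\frac{|\mathbf{V}_k|}{|\mathbf{V}_k'|}\cdot\frac{|\mathbf{V}_k'|}{|\mathbf{V}_k|}=1,
\end{align*}
so that $\log|\mathbf{I}+\mathbf{U}_k^\star|=-\log|\mathbf{I}-f_2|=\log|\mathbf{I}+\Gamma_k|$. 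A short algebraic simplification based on $\mathbf{U}_k^\star=f_2(\mathbf{I}-f_2)^{-1}$ yields $\operatorname{Tr}\bigl((\mathbf{I}+\mathbf{U}_k^\star)f_2\bigr)-\operatorname{Tr}(\mathbf{U}_k^\star)=0$, and the target equality follows. Summing over $k$ and combining with the $\log|\mathbf{I}+\mathbf{U}_k|$ and $\operatorname{Tr}(\mathbf{U}_k)$ terms produces exactly the $f_1$ announced in the statement.

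The main obstacle I expect is the bookkeeping at the Sylvester/Woodbury step: the $\mathbf{V}_k$ inside $f_2$ includes the user's own signal covariance whereas $\Gamma_k$ on the left naturally lives on $\mathbf{V}_k'^{-1}$, so moving between $\mathbf{V}_k^{-1}$ and $\mathbf{V}_k'^{-1}$ requires explicit appeal to the matrix inversion lemma, and the order of the factors $\mathbf{H}_k^H\mathbf{W}_k$ and $\mathbf{W}_k^H\mathbf{H}_k$ has to be tracked so that the log-determinant and trace identities remain internally consistent. A minor secondary point is the stated diagonal restriction on $\mathbf{U}_k$: either it is purely notational (with $\mathbf{U}_k$ implicitly ranging over all Hermitian PSD matrices), or it needs a short separate argument showing that only the diagonal part of $\mathbf{U}_k^\star$ influences the value of $f_1$.
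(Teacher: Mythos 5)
Your proposal is correct and is essentially the derivation the paper itself relies on: the paper's ``proof'' of Proposition~1 is a one-line deferral to the vector-version references, and the per-user inner maximization over $\mathbf U_k$, the stationarity condition $(\mathbf I+\mathbf U_k)^{-1}=\mathbf I-f_2$, the concavity argument, and the determinant/trace cancellations you describe are exactly the standard matrix-LDT argument those references contain. Two loose ends in your sketch deserve to be pinned down rather than left as ``expected obstacles.'' First, the factor-ordering issue you flag is real and not merely bookkeeping: with $f_2$ exactly as printed, i.e.\ $\mathbf G\mathbf V_k^{-1}\mathbf G^{\operatorname{H}}$ with $\mathbf G=\mathbf H_k^{\operatorname{H}}\mathbf W_k$, the Weinstein--Aronszajn swap gives $|\mathbf I-f_2|=|\mathbf V_k-\mathbf G^{\operatorname{H}}\mathbf G|/|\mathbf V_k|$, which is \emph{not} $|\mathbf V_k'|/|\mathbf V_k|$ because $\mathbf V_k'=\mathbf V_k-\mathbf G\mathbf G^{\operatorname{H}}$; your chain $|\mathbf I+\Gamma_k|\cdot|\mathbf I-f_2|=1$ closes only for the MMSE-consistent ordering $f_2=\mathbf W_k^{\operatorname{H}}\mathbf H_k\mathbf V_k^{-1}\mathbf H_k^{\operatorname{H}}\mathbf W_k$ (the ordering implicit in Appendix~\ref{app3}, where $\mathbf I-f_2$ is the MMSE matrix), so you should state explicitly that this is the reading you adopt rather than present the displayed product as already equal to one. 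Second, on the ``diagonal'' qualifier: your fallback resolution (that only the diagonal part of $\mathbf U_k^{\star}$ influences $f_1$) is false, since $\log|\mathbf I+\mathbf U_k|$ depends on the off-diagonal entries; the correct resolution is your first one --- $\mathbf U_k$ must range over Hermitian matrices with $\mathbf U_k\succ-\mathbf I$, as confirmed by the paper's own optimizer $\mathbf U_k=\Gamma_k$ in \eqref{eq14}, which is not diagonal. It is also worth recording that $f_2\prec\mathbf I$ (because $\mathbf V_k\succ\mathbf G\mathbf G^{\operatorname{H}}$ whenever $\sigma_k^2+\delta_k^2/\rho_k>0$), so $\mathbf U_k^{\star}=(\mathbf I-f_2)^{-1}-\mathbf I\succeq\mathbf 0$ is well defined.
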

\begin{proof}
The proposition is the matrix version of LDT, and the detailed proof of this reformulation method is similar to the vector version \cite{shen2018fractional,shen2018fractional2,8862850}.
\end{proof}
Although, the matrix-ratio term, i.e., SINR, has been moved out of the logarithm function, and the transformed problem has been considerably simplified, the transformed problem is still non-convex and difficult to be handled directly since the optimization variables are still coupled in the matrix-ratio term. Hence, we reformulate the transformed problem as the linear form as follows.


\begin{proposition}[Matrix version of FPT]\label{fpt}
By introducing auxiliary diagonal matrix $\mathbf L_k\in \mathbb{C}^{M_u \times M_u}$, $\forall k$, the above sub-problem can be equivalently transformed to
\begin{align}
    \mathop {\max }_{\mathbf W, \mathbf L} \quad & {f_3}\left( {\rho, \mathbf W, \Theta, \mathbf U, \mathbf L} \right) \label{e20} \\
    \operatorname{s.t.} \quad 
    & \eqref{4a}-\eqref{4d}, \notag
\end{align}
where ${f_3}\left( {\rho, \mathbf W, \Theta, \mathbf U, \mathbf L} \right)$ can be expressed as 
\begin{align}\label{ep12}
&{f_3}\left( {\rho,\mathbf W,\Theta ,\mathbf U,\mathbf L } \right)  = {\sum _{k = 1}^K {\log \left| { {\mathbf {\bar U} _{k}}} \right|} }  - {\sum _{k = 1}^K { \operatorname{Tr} \left( {{\mathbf U _{k}}} \right)} }\notag\\
 &+{\sum _{k = 1}^K {\operatorname{Tr} \left( { {{\mathbf {\bar U} _{k}}}\mathbf H_k^{\operatorname{H}}\mathbf W_k \mathbf L_k^{\operatorname{H}}}\right)}}+{\sum _{k = 1}^K {\operatorname{Tr} \left( { {{\mathbf {\bar U} _{k}}}\mathbf L_k\mathbf W_k^{\operatorname{H}}\mathbf H_k}\right)}} \notag\\
 &-{\sum _{k = 1}^K {\operatorname{Tr} \left( { {{\mathbf {\bar U} _{k}}}\mathbf L_k\sum_{i=1}^K{\mathbf H_k^{\operatorname{H}}\mathbf W_i\mathbf W_i^{\operatorname{H}}\mathbf H_k}\mathbf L_k^{\operatorname{H}}}\right)}}\notag\\
 &-{\sum _{k = 1}^K {\operatorname{Tr} \left( { {{\mathbf {\bar U} _{k}}}\mathbf L_k\left(\sigma _k^2\mathbf I_{M_u} + {\delta _k^2}/{\rho _k}\mathbf I_{M_u}\right)\mathbf L_k^{\operatorname{H}}}\right)}},
\end{align}
where ${\mathbf {\bar U} _{k}}\triangleq{\mathbf {U} _{k}}+\mathbf I_{M_u},\forall k$.
\end{proposition}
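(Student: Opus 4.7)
My plan is to apply the matrix quadratic transform, which is the matrix-valued analogue of the Shen--Yu fractional programming transform \cite{shen2018fractional,shen2018fractional2}, to each matrix ratio $\mathbf{H}_k^{\operatorname{H}}\mathbf{W}_k\mathbf{V}_k^{-1}\mathbf{W}_k^{\operatorname{H}}\mathbf{H}_k$ hidden inside $f_1$ of Proposition 1. The governing identity I would first establish is that, for any matrix $\mathbf{A}$ and any $\mathbf{B}\succ 0$ of compatible dimensions,
$\mathbf{A}^{\operatorname{H}}\mathbf{B}^{-1}\mathbf{A}=\max_{\mathbf{L}}\{\mathbf{L}\mathbf{A}+\mathbf{A}^{\operatorname{H}}\mathbf{L}^{\operatorname{H}}-\mathbf{L}\mathbf{B}\mathbf{L}^{\operatorname{H}}\}$,
with the unique maximizer $\mathbf{L}^{\star}=\mathbf{A}^{\operatorname{H}}\mathbf{B}^{-1}$. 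This is a one-line verification: the right-hand side is jointly quadratic and strictly concave in $\mathbf{L}$ because $\mathbf{B}\succ 0$, so setting its Wirtinger derivative to zero gives $\mathbf{B}\mathbf{L}^{\operatorname{H}}=\mathbf{A}$, and substitution recovers the left-hand side.

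Next I would premultiply this identity by $\bar{\mathbf{U}}_k=\mathbf{U}_k+\mathbf{I}_{M_u}$ (which is positive definite, so concavity is preserved) inside a trace, yielding
$\operatorname{Tr}(\bar{\mathbf{U}}_k\mathbf{A}^{\operatorname{H}}\mathbf{B}^{-1}\mathbf{A})=\max_{\mathbf{L}_k}\operatorname{Tr}(\bar{\mathbf{U}}_k(\mathbf{L}_k\mathbf{A}+\mathbf{A}^{\operatorname{H}}\mathbf{L}_k^{\operatorname{H}}-\mathbf{L}_k\mathbf{B}\mathbf{L}_k^{\operatorname{H}}))$.
Choosing $\mathbf{A}=\mathbf{W}_k^{\operatorname{H}}\mathbf{H}_k$ and $\mathbf{B}=\mathbf{V}_k$, and then expanding $\mathbf{V}_k$ using its definition from Proposition 1, produces exactly the four $\mathbf{L}_k$-bearing trace terms in \eqref{ep12}; the two remaining $\mathbf{L}_k$-free terms, namely $\log|\bar{\mathbf{U}}_k|$ and $-\operatorname{Tr}(\mathbf{U}_k)$, are simply inherited from $f_1$. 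So term-by-term, the new objective $f_3$ pointwise-maximized over $\mathbf{L}_k$ equals $f_1$.

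The third step is to commute the maximum over the $\mathbf{L}_k$'s with the outer optimization over $(\rho,\mathbf{W},\Theta,\mathbf{U})$. This is routine because the feasible set \eqref{4a}--\eqref{4d} does not involve any $\mathbf{L}_k$, and because each $\mathbf{L}_k$ appears only in the $k$-th summand of $f_3$, so the inner maxima over the $\mathbf{L}_k$'s decouple and can be performed independently. Consequently $\max_{\rho,\mathbf{W},\Theta,\mathbf{U}} f_1 = \max_{\rho,\mathbf{W},\Theta,\mathbf{U},\mathbf{L}} f_3$, and any optimizer of one problem induces an optimizer of the other by setting $\mathbf{L}_k^{\star}=\mathbf{W}_k^{\operatorname{H}}\mathbf{H}_k\mathbf{V}_k^{-1}$.

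I do not anticipate a serious obstacle; the only point that requires a brief justification is that $\mathbf{V}_k\succ 0$ along the entire feasible set, so that both the inversion $\mathbf{V}_k^{-1}$ in $f_2$ and the quadratic transform are well-defined. This follows from \eqref{4d}, since $\rho_k\in(0,1)$ guarantees $(\delta_k^2/\rho_k)\mathbf{I}_{M_u}\succ 0$, and this is added to a sum of positive semidefinite terms. The rest is algebraic bookkeeping to align the expanded expression with \eqref{ep12}.
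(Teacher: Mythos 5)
Your argument is correct and is exactly what the paper intends: the paper's own ``proof'' simply defers to the vector-version FPT references, and your completed-square derivation of the matrix quadratic transform --- $\mathbf L\mathbf A+\mathbf A^{\operatorname{H}}\mathbf L^{\operatorname{H}}-\mathbf L\mathbf B\mathbf L^{\operatorname{H}}=\mathbf A^{\operatorname{H}}\mathbf B^{-1}\mathbf A-(\mathbf L-\mathbf A^{\operatorname{H}}\mathbf B^{-1})\mathbf B(\mathbf L-\mathbf A^{\operatorname{H}}\mathbf B^{-1})^{\operatorname{H}}$, premultiplied by $\mathbf{\bar U}_k\succ 0$ under the trace so the maximizer is unchanged --- is the standard way to carry out that matrix extension, with the $\mathbf L_k$-free terms inherited from $f_1$ and the interchange of maximizations justified because \eqref{4a}--\eqref{4d} do not involve $\mathbf L$. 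The only slip is cosmetic: with $\mathbf A=\mathbf W_k^{\operatorname{H}}\mathbf H_k$ the maximizer $\mathbf A^{\operatorname{H}}\mathbf B^{-1}$ reads $\mathbf H_k^{\operatorname{H}}\mathbf W_k\mathbf V_k^{-1}$, not $\mathbf W_k^{\operatorname{H}}\mathbf H_k\mathbf V_k^{-1}$ as written in your final paragraph.
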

\begin{proof}
The proposition is extend FPT from the vector into the matrix version, the details proof please refer to  \cite{shen2018fractional,shen2018fractional2,8862850}.
\end{proof}

Note that the optimization variables in Problem \ref{e20} are decoupled, and the objective function in \eqref{ep12} of Problem \ref{e20} is concave with respect to any one of $\rho$, $\mathbf W$, $\Theta$, $\mathbf U$ and $\mathbf L$, with the others being fixed. Based on this fact, we decompose Problem \ref{e20} into several sub-problems, and solve them alternately at three sides, i.e., the PSRs side, the AP side and the IRS side. In the $t$-th iteration, we have

\subsection{Optimization at the PSRs side}
Clearly, to design the optimal PSRs, we need to optimize three variables related to the PSRs, i.e., $\mathbf U$, $\rho$ and $\mathbf L$.

First, we consider to optimize the auxiliary diagonal matrix of $\mathbf U_k,\forall k$. With fixed $\rho^{\left(\operatorname{t}\right)}$, $\mathbf W ^{\left(\operatorname{t}\right)}$, $\Theta ^{\left(\operatorname{t}\right)}$ and $\mathbf L^{\left(\operatorname{t}\right)}$, the solution of $\mathbf U$ can be obtained by solving the following sub-problem
\begin{align}\label{eq13}
    \mathbf U^{\left(\operatorname{t+1}\right)} \triangleq \arg\mathop{\max} _\mathbf U {f_3}\left( \rho^{\left(\operatorname{t}\right)}, {\mathbf W^{\left(\operatorname{t}\right)},\Theta^{\left(\operatorname{t}\right)} ,\mathbf U,\mathbf L^{\left(\operatorname{t}\right)} } \right).
\end{align}
Note that, the variable $\mathbf U$ only appears in the objective function of the Sub-problem \ref{eq13} and does not exist in any constraint set. Therefore, by setting the partial derivatives of the objective function in Sub-problem \ref{eq13} with respect to $\mathbf U_k,\forall k$ to be zero, and after some matrix manipulations, the closed-form solution is given as
\begin{align}\label{eq14}
    \mathbf U_k^{\left(\operatorname{t+1}\right)}= \Gamma _k, \forall k.
\end{align}
Note that the variables $\rho$, $\mathbf W$, $\Theta$ and $\mathbf L$ only exists in some terms of ${f_3}\left( {\mathbf W,\Theta ,\mathbf U,\mathbf L} \right)$,
with the obtained $\mathbf U^{\left(\operatorname{t+1}\right)}$ in \eqref{eq14}, we can recast the objective function as $f_3\left( {\rho,\mathbf W,\Theta ,\mathbf U,\mathbf L} \right)\triangleq f_4\left( {\rho,\mathbf W,\Theta ,\mathbf L} \right)+\operatorname{Const}\left(\mathbf U\right)$, where 
\begin{align}
    &f_4\left( {\rho,\mathbf W,\Theta,\mathbf L} \right)=\sum _{k = 1}^K {\operatorname{Tr} \left( { {{\mathbf {\bar U} _{k}}}\mathbf H_k^{\operatorname{H}}\mathbf W_k \mathbf L_k^{\operatorname{H}}}\right)}\notag\\
    &+\sum_{k=1}^K{\operatorname{Tr} \left( { {{\mathbf {\bar U} _{k}}}\mathbf L_k \mathbf W_k^{\operatorname{H}}\mathbf H_k}\right)}\notag\\
    &-{\sum _{k = 1}^K {\operatorname{Tr} \left( { {{\mathbf {\bar U} _{k}}}\mathbf L_k\sum_{i=1}^K{\mathbf H_k^{\operatorname{H}}\mathbf W_i\mathbf W_i^{\operatorname{H}}\mathbf H_k}\mathbf L_k^{\operatorname{H}}}\right)}}\notag\\
    &-{\sum _{k = 1}^K {\operatorname{Tr} \left( { {{\mathbf {\bar U} _{k}}}\mathbf L_k\left(\sigma _k^2\mathbf I_{M_u} + {\delta _k^2}/{\rho _k}\mathbf I_{M_u}\right)\mathbf L_k^{\operatorname{H}}}\right)}},
\end{align} 
and $\operatorname{Const}\left(\mathbf U\right)=\sum_{k=1}^K{\log\left|\mathbf {\bar U}_k\right|-\operatorname{Tr}\left(\mathbf U_k\right)}$.
Therefore by fixing the obtained variable of $\mathbf U_k, \forall k$ in \eqref{eq14}, we can optimize other variables by only investigating $f_4$. 

Particularly, we can design the optimal PS ratios of PSRs and check the feasibility of original problem. With fixed $\mathbf W^{\left(\operatorname{t}\right)}$, $\Theta^{\left(\operatorname{t}\right)}$ and $\mathbf L^{\left(\operatorname{t}\right)}$, and the obtained $\mathbf U^{\left(\operatorname{t+1}\right)}$ in \eqref{eq14}, the sub-problem for optimizing $\rho$ is equivalent to 
\begin{align}\label{eq16}
    \rho^{\left(\operatorname{t+1}\right)} \triangleq \arg\mathop{\max} _\rho \quad&{f_4}\left( {\rho,\mathbf W^{\left(\operatorname{t}\right)},\Theta^{\left(\operatorname{t}\right)},\mathbf L^{\left(\operatorname{t}\right)} } \right)\\
    \operatorname{s.t.}\quad& \eqref{4b},\; \eqref{4d},\notag
\end{align}
and we have the following lemma
\begin{lemma}\label{l1}
The objective function of Problem \ref{eq16} is concave and monotonous increasing with respect to $\rho$.
\end{lemma}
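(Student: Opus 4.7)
The plan is to inspect how $\rho$ actually enters $f_4$ and reduce the claim to an elementary one-dimensional fact. Reading off the five summands of $f_4$ in (15), the variables $\rho_1,\ldots,\rho_K$ appear only through the noise-related term $\sigma_k^2\mathbf I_{M_u}+\delta_k^2/\rho_k\,\mathbf I_{M_u}$ inside the last trace; everything else is constant in $\rho$ once $\mathbf W$, $\Theta$, $\mathbf U$, $\mathbf L$ are frozen. Collecting the $\rho$-dependent part gives
\begin{align}
f_4(\rho,\cdot) = -\sum_{k=1}^{K}\frac{c_k}{\rho_k} + \operatorname{Const}(\rho),
\qquad c_k \triangleq \delta_k^2\,\operatorname{Tr}\bigl(\bar{\mathbf U}_k\mathbf L_k\mathbf L_k^{\operatorname{H}}\bigr),
\end{align}
so the whole dependence on $\rho$ is separable across $k$.

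Next I would verify that each coefficient $c_k$ is non-negative. By the update (14), $\mathbf U_k^{(t+1)}=\Gamma_k=\mathbf H_k^{\operatorname{H}}\mathbf W_k\mathbf V_k^{-1}\mathbf W_k^{\operatorname{H}}\mathbf H_k$, and since $\mathbf V_k\succ\mathbf 0$ (its noise part $\delta_k^2/\rho_k\,\mathbf I_{M_u}$ alone is positive definite), $\mathbf U_k$ is Hermitian positive semidefinite, hence $\bar{\mathbf U}_k=\mathbf I_{M_u}+\mathbf U_k\succeq\mathbf I_{M_u}\succ\mathbf 0$. Using $\operatorname{Tr}(\bar{\mathbf U}_k\mathbf L_k\mathbf L_k^{\operatorname{H}})=\operatorname{Tr}(\mathbf L_k^{\operatorname{H}}\bar{\mathbf U}_k\mathbf L_k)\ge 0$, we obtain $c_k\ge 0$ (and strictly positive whenever $\mathbf L_k\neq \mathbf 0$).

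With non-negativity of $c_k$ in hand, the remainder is a one-line computation. Viewing $f_4$ as a function of $\rho$ on $(0,1)^K$, its Hessian is diagonal with entries
\begin{align}
\frac{\partial^2 f_4}{\partial \rho_k^2} = -\frac{2c_k}{\rho_k^3}\le 0,\qquad
\frac{\partial f_4}{\partial \rho_k} = \frac{c_k}{\rho_k^2}\ge 0,
\end{align}
so $f_4$ is jointly concave in $\rho$ and componentwise non-decreasing, which is exactly the statement of the lemma.

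The only delicate point is justifying $\bar{\mathbf U}_k\succeq\mathbf 0$, since the proposition declares $\mathbf U_k$ to be diagonal while (14) identifies it with the Hermitian PSD matrix $\Gamma_k$; either interpretation (the full matrix, or the diagonal formed by its diagonal entries, which are non-negative reals) leaves $\bar{\mathbf U}_k$ positive definite and preserves the sign chain above. No deeper tool, such as perspective-function concavity, is needed because the separability in $\rho_k$ trivializes the joint analysis.
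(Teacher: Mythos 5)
Your proof is correct, but it takes a genuinely different route from the paper's. The paper proves the lemma by going back to the LDT-level objective $f_1$: it isolates the matrix-ratio term $\mathcal A_k\big/\bigl(\mathcal B_k+\mathcal C_k/\rho_k\bigr)$, differentiates it twice in $\rho_k$ (treating the matrix quantities essentially as commuting positive scalars, so that expressions like $-2\mathcal A_k\mathcal B_k\mathcal C_k/(\mathcal B_k\rho_k+\mathcal C_k)^3$ make sense), concludes concavity and monotonicity of that fractional form, and then asserts that the objective of Problem~\eqref{eq16} inherits these properties ``similarly.'' You instead work directly with $f_4$, the actual objective of Problem~\eqref{eq16}, observe that $\rho_k$ survives only in the term $-\delta_k^2\operatorname{Tr}\bigl(\bar{\mathbf U}_k\mathbf L_k\mathbf L_k^{\operatorname{H}}\bigr)/\rho_k$, and reduce everything to the scalar map $\rho_k\mapsto -c_k/\rho_k$ with $c_k\ge 0$. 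Your route is more elementary and, in a sense, more rigorous: it avoids the paper's informal matrix division and its unproved ``similarly'' step bridging $f_1$ and $f_4$, and it makes explicit the positivity argument ($\mathbf U_k=\Gamma_k\succeq\mathbf 0$, hence $\bar{\mathbf U}_k\succ\mathbf 0$, hence $c_k\ge 0$) that the paper leaves implicit. What the paper's version buys in exchange is a statement about the fractional (SINR-level) objective itself, which is closer to the original rate expression; but for the lemma as literally stated -- concavity and monotonicity of the objective of Problem~\eqref{eq16}, which is what justifies saturating $\rho_k$ at the bound in \eqref{eq17} -- your argument is sufficient and arguably preferable.
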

\begin{proof}
The proof is presented in Appendix \ref{app1}.
\end{proof}

According to Lemma \ref{l1}, and considering \eqref{4b}, \eqref{4d} and the objective function in Problem \ref{eq16} together, $\rho_k,\;\forall k$ is required to satisfy the following condition \cite{9104736,2017Joint}
\begin{align}
   \rho _k^{\left(t+1\right)} = 1 - \frac{{{\mathbf {\bar e}_k}}}{{{\eta _k}\operatorname{Tr}\left( {\sum_{i = 1}^K {{\mathbf H_k^{\operatorname{H}}}{\mathbf W_i}\mathbf W_i^{\operatorname{H}}\mathbf H_k} } \right)}}  , \forall k,\label{eq17}
\end{align}
and we have the following lemma
\begin{lemma}\label{l2}
Problem \ref{eq16} is feasible when $\rho _k^{\left(t+1\right)} > 0, \forall k$.
\end{lemma}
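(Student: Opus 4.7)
The plan is to observe that Lemma \ref{l1} already reduces the problem of verifying feasibility of Problem \ref{eq16} to checking that the feasible set, as a function of $\rho_k$, is non-empty for every $k$. Since the objective $f_4$ is monotonically increasing and concave in each $\rho_k$, the only way for the feasible set to be empty is for the two inequalities \eqref{4b} and \eqref{4d} to be in direct conflict, which I will argue happens precisely when $\rho_k^{(t+1)} \leq 0$.

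First I would rewrite the EH constraint \eqref{4b} as an explicit upper bound on $\rho_k$. Because $1-\rho_k>0$ from \eqref{4d}, multiplying \eqref{4b} through by $(1-\rho_k)/(\eta_k\operatorname{Tr}(\cdot))$ and rearranging gives
\begin{equation*}
\rho_k \;\leq\; 1-\frac{\bar{\mathbf e}_k}{\eta_k\operatorname{Tr}\!\left(\sum_{i=1}^K \mathbf H_k^{\operatorname{H}}\mathbf W_i\mathbf W_i^{\operatorname{H}}\mathbf H_k\right)} \;=\; \rho_k^{(t+1)},
\end{equation*}
which is exactly the closed-form assignment in \eqref{eq17}. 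Combining this with \eqref{4d}, the per-user feasible set is the interval $\bigl(0,\,\rho_k^{(t+1)}\bigr]$ (when this upper bound is less than $1$, which it is since $\bar{\mathbf e}_k/\eta_k\operatorname{Tr}(\cdot)>0$).

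Next I would conclude in both directions. If $\rho_k^{(t+1)}>0$ for every $k$, then each interval $\bigl(0,\rho_k^{(t+1)}\bigr]$ is non-empty, so the choice $\rho_k=\rho_k^{(t+1)}$ satisfies \eqref{4b} and \eqref{4d} simultaneously, which establishes feasibility; moreover, by Lemma \ref{l1} this endpoint is the maximizer of Problem \ref{eq16}, confirming that \eqref{eq17} indeed delivers the optimal PS ratio. Conversely, if $\rho_k^{(t+1)}\leq 0$ for some $k$, then \eqref{4b} forces $\rho_k\leq\rho_k^{(t+1)}\leq 0$, which contradicts \eqref{4d}, and Problem \ref{eq16} is infeasible for that iterate of $(\mathbf W^{(t)},\Theta^{(t)})$.

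There is no real obstacle in the proof itself; the only thing to be careful about is the signs and the strictness of the inequalities when clearing the factor $(1-\rho_k)$, and being explicit that $\rho_k^{(t+1)}<1$ is automatic from the positivity of $\bar{\mathbf e}_k/(\eta_k\operatorname{Tr}(\cdot))$, so feasibility truly hinges only on the lower-end condition $\rho_k^{(t+1)}>0$. This condition has the clean physical interpretation that the current transmit design $\mathbf W^{(t)}$ together with the current IRS phase shifts $\Theta^{(t)}$ must already deliver enough RF power at PSR $k$ to meet its EH requirement $\bar{\mathbf e}_k/\eta_k$, which is exactly what one would expect to be the feasibility certificate for an integrated PS receiver.
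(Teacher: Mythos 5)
Your proposal is correct and follows essentially the same route as the paper's proof: both convert the EH constraint \eqref{4b} into the explicit upper bound $\rho_k \le 1-\bar{\mathbf e}_k/\bigl(\eta_k\operatorname{Tr}(\cdot)\bigr)=\rho_k^{(t+1)}$, intersect it with the bound \eqref{4d} to get the per-user interval $\bigl(0,\rho_k^{(t+1)}\bigr]$, and invoke the monotonicity from Lemma \ref{l1} to place the optimum at the upper endpoint, so feasibility reduces to that endpoint being positive. Your additional converse direction and the explicit check that $\rho_k^{(t+1)}<1$ are consistent with (and slightly tidier than) the paper's argument, but do not constitute a different approach.
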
 
\begin{proof}
The proof is presented in Appendix \ref{app2}.
\end{proof}

Now, we optimize the variable $\mathbf L_k,\forall k$ with the fixed variables $\mathbf W^{\left(\operatorname{t}\right)}$ and $\Theta^{\left(\operatorname{t}\right)}$, and the obtained solutions of $\mathbf U^{\left(\operatorname{t+1}\right)}$ and $\rho^{\left(\operatorname{t+1}\right)}$, by using \eqref{eq14} and \eqref{eq17}, respectively. The sub-problem of optimization for $\mathbf L$ is equivalent to 
\begin{align}
    \mathbf L^{\left(\operatorname{t+1}\right)} \triangleq \arg\mathop{\max} _\mathbf L {f_4}\left( {\rho^{\left(\operatorname{t+1}\right)}, \mathbf W^{\left(\operatorname{t}\right)},\Theta^{\left(\operatorname{t}\right)},\mathbf L } \right).
\end{align}
Similar to the method for solving the sub-problem of $\mathbf U$ in \eqref{eq13}, the variable $\mathbf L$ does not exist in the constraint set, hence, by setting the partial derivatives of the objective function of the above sub-problem to be zero, the optimal solution of $\mathbf L$ can be obtained as
\begin{align}\label{eq18}
    \mathbf L_k^{\left(\operatorname{t+1}\right)}=\mathbf V_k^{-1}\mathbf H_k^{\operatorname{H}}\mathbf W_k, \forall k.
\end{align}

Note that $\mathbf L_k,\forall k$ can be treated as the decoding matrix to decode the received signals at the PSRs side, which leads to the minimum mean-square error (MMSE). For details information, please refer to Appendix \ref{app3}. 

Aforementioned, we complete the optimal design at the PSRs side.

\subsection{Optimization at the AP side}
At the AP side, with fixed $\Theta^{\left(\operatorname{t}\right)}$, and the obtained solutions of variables $\mathbf U^{\left(\operatorname{t+1}\right)}$, $\rho^{\left(\operatorname{t+1}\right)}$ and $\mathbf L^{\left(\operatorname{t+1}\right)}$, the optimization sub-problem of ATB is defined as  
\begin{align}
    \mathbf W^{\left(\operatorname{t+1}\right)} \triangleq \arg\mathop{\max} _\mathbf W& \; {f_4}\left( {\rho^{\left(\operatorname{t+1}\right)}, \mathbf W,\Theta^{\left(\operatorname{t}\right)} , \mathbf L^{\left(\operatorname{t+1}\right)} } \right)\\
    \operatorname{s.t.}& \; \eqref{4a},\;\eqref{4b}.\notag
\end{align}
By omitting the irrelevant constant term, i.e., ${\sum _{k = 1}^K {\operatorname{Tr} \left( { {{\mathbf {\bar U} _{k}}}\mathbf L_k\left(\sigma _k^2\mathbf I_{M_u} + {\delta _k^2}/{\rho _k}\mathbf I_{M_u}\right)\mathbf L_k^{\operatorname{H}}}\right)}}$, which has no impact on the update of $\mathbf W$, the above optimization problem can be simplified  as
\begin{align}\label{eq21}
    \mathop{\min} _\mathbf W &\; {\sum _{k = 1}^K {\operatorname{Tr} \left( { {{\mathbf {\bar U} _{k}}}\mathbf L_k\mathbf H_{k}^{\operatorname{H}}{\sum _{j = 1}^K {{\mathbf W_{j}}\mathbf W_{j}^{\operatorname{H}}} }{\mathbf H_{k}}\mathbf L_k^{\operatorname{H}}}\right)}}\notag\\
    &-{\sum _{k = 1}^K {\operatorname{Tr} \left( { {{\mathbf {\bar U} _{k}}}\mathbf H_k^{\operatorname{H}}\mathbf W_k \mathbf L_k^{\operatorname{H}}}\right)}}-{\sum _{k = 1}^K {\operatorname{Tr} \left( { {{\mathbf {\bar U} _{k}}}\mathbf L_k \mathbf W_k^{\operatorname{H}}\mathbf H_k}\right)}} \\
    \operatorname{s.t.}& \; \eqref{4a},\;\eqref{4b}.\notag
\end{align}
Since the minimum EH constraint \eqref{4b} is non-convex, the sub-problem to obtain $\mathbf W$ is still challenging. We can exploit SCA approach to approximate the constraint \eqref{4b}. 
First, we rewrite the constraint \eqref{4b} as 
\begin{align}
    \operatorname{Tr}\left (\sum_{i=1}^K{\mathbf {W}_i^{\operatorname{H}}\mathbf { B}_k{\mathbf {W}_i}} \right) \ge \mathbf {\bar e}_k/{{\eta_k \left( {1 - {\rho _k}} \right)}} \triangleq {\mathbf {\dot e}_k}, \;\forall k,
\end{align}
where ${\mathbf B_k} \triangleq  {\mathbf H_k}\mathbf H_k^{\operatorname{H}}$. Then, by employing the first-order Taylor expansion, in $n$-th sub-iteration for updating $\mathbf W$, we arrive at 
\begin{align}
    \operatorname{Tr} \left (\sum_{i=1}^K{\mathbf {W}_i^{\operatorname{H}}
    {\mathbf {B}_k}{\mathbf {W}_i}}\right ) \ge  -\operatorname{Tr} \left ( \sum_{i=1}^K{{\left(\mathbf {W}_i^{\left( {n} \right)}\right)}^{\operatorname{H}}{\mathbf{B}_k}{\mathbf {W}_i^{\left( {n} \right)}} }\right ) \quad\notag\\
    + 2 \Re \left\{\operatorname{Tr}\left (\sum_{i=1}^K{{\left(\mathbf {W}_i^{\left( {n} \right)}\right)}^{\operatorname{H}}{\mathbf { B}_k}\mathbf { W}_i} \right )\right\},\forall k. \label{e26}
\end{align}
where $\mathbf W _k^{\left(n\right)}, \forall k$, denotes the solution obtained from the previous sub-iteration. Hence, we can approximate the constraint \eqref{4b} by 
\begin{align}
    2 \Re \left\{\operatorname{Tr}\left (\sum_{i=1}^K{
    {\left(\mathbf {W}_i^{\left( {n} \right)}\right)}^{\operatorname{H}}
    {\mathbf {B}_k}\mathbf {W}_i }\right )\right\} \qquad\qquad\qquad\qquad\notag\\
    \ge {\mathbf {\dot e}_k} + \operatorname{Tr} \left (\sum_{i=1}^K{ {\left(\mathbf {W}_i^{\left( {n} \right)}\right)}^{\operatorname{H}}
    {\mathbf {B}_k}{\mathbf {W}_i^{\left( {n } \right)}} }\right )
    \triangleq {\mathbf {\ddot e} _k^{\left(n\right)}},\forall k.\label{e27}
\end{align}
Similarly, by defining $\mathbf A_k \triangleq  {{\mathbf H_k}{\mathbf {\bar U}_k}{\mathbf L_k}\mathbf L_k^{\operatorname{H}}\mathbf H_k^{\operatorname{H}}}$ and $\mathbf A\triangleq \sum_{k=1}^K{\mathbf A_k}$, and ${\mathbf {S}_k} \triangleq  {\mathbf H_k}{\mathbf {\bar U}_k}{\mathbf L_k}$, we obtain the following problem, which is equivalent to Problem \ref{eq21}
\begin{align}\label{eq25}
    \mathop {\min }_\mathbf {W}\quad & f_5\left(\mathbf W\right)\triangleq\sum_{k=1}^K{\operatorname{Tr}\left(\mathbf {W}_k^{\operatorname{H}}\mathbf {A}
    \mathbf {{W}}_k\right)} - 2\sum_{k=1}^K{\operatorname{Tr}\left(
    \mathbf{W}_k^{\operatorname{H}}\mathbf {S}_k\right)} \\
    \operatorname{s.t.} \quad & \eqref{4a},\;\eqref{e27}. \notag 
\end{align}
Since the objective function $f_5\left(\mathbf W\right)$ and the constraint \eqref{4a} are both convex, and the constraint \eqref{e27} is linear, the problem constitutes a convex optimization problem, which can be solved by standard CVX tools \cite{cvx} directly. However, CVX is employed in each sub-iteration, and this leads to a high computational complexity. Consequently, since the dual gap is guaranteed to be zero, we give the solution of ATB in a nearly closed-form by exploiting Lagrangian dual sub-gradient method. The corresponding Lagrangian function of Problem \ref{eq25} can be written as
\begin{align}
    &\mathcal{L}\left(\mathbf {W}| \mathbf W^{\left(n\right)}, \tau, \mu\right)\triangleq\sum_{k=1}^K{\operatorname{Tr}\left(\mathbf {W}_k^{\operatorname{H}}\mathbf {A}
    \mathbf {{ W}}_k\right)} - 2\sum_{k=1}^K{\operatorname{Tr}\left(
    \mathbf{W}_k^{\operatorname{H}}\mathbf {S}_k\right)}\notag\\
    &+\tau\left(\sum_{k=1}^K{\operatorname{Tr} \left(\mathbf {W}_k^{\operatorname{H}} \mathbf {W}_k \right)}-P_{\max}\right)\notag\\
    &-\sum_{k=1}^K{\mu_k\left(2\Re \left\{\sum_{i=1}^K{\operatorname{Tr}\left (
    {\left(\mathbf {W}_i^{\left( {n} \right)}\right)}^{\operatorname{H}}
    {\mathbf {B}_k}\mathbf {W}_i \right )}\right\}-{\mathbf {\ddot e} _k^{\left(n\right)}}\right)},\label{e29}
\end{align}
where $\tau \ge 0$ and $\mu_k \ge 0, \;\forall k$ are the Lagrangian multipliers corresponding to the constraints \eqref{4a} and \eqref{e27}, respectively. The KKT conditions are given as following
\begin{align}
    \operatorname{Tr}\left(\left({\mathbf {A} +  \tau \mathbf {I}_{M_b}}\right) \mathbf {W}_k-
    \mathbf {S}_k-{\mu_k\mathbf {B}_k}{\mathbf {W}_k^{\left(n\right)}}\right)&= 0,\forall k; \\
    \tau\left( \sum_{k=1}^K{\operatorname{Tr}\left(\mathbf {W}_k^{\operatorname{H}}  \mathbf {W}_k \right)}-P_{\max}\right)&= 0;\\
    \mu_k\left(2\Re \left\{\sum_{i=1}^K{\operatorname{Tr}\left ({\left(\mathbf {W}_i^{\left( {n} \right)}\right)}^{\operatorname{H}}{\mathbf {B}_k}\mathbf {W}_i \right )}\right\}-{\mathbf {\ddot e} _k}\right)&=0,\forall k.\label{e27c}
\end{align}
With fixed Lagrangian multipliers of $\mu _k^{\left(n\right)}$ and $\tau^{\left(n\right)}$ in $n$-th sub-iteration, the optimal solution of ATB can be explicitly expressed as
\begin{align}
    \mathbf {W}_k^{\left(n+1\right)}=\frac{\mathbf {S}_k+{\mu_k^{\left(n\right)}
    \mathbf {B}_k{\mathbf {W}_k^{\left(n\right)}}}}{{\mathbf {A} +  \tau^{\left(n\right)} \mathbf I_{M_b}}},\forall k \in K.\label{e31}
\end{align}
Then, we update Lagrangian multipliers with the optimized $\mathbf {W}_k^{\left(n+1\right)}, \forall k$ by employing \eqref{e31}. Particularly, with fixed $\tau^{\left(n\right)}$, and by defining $\mathbf {\bar A}=\left(\mathbf A +\tau^{\left(n\right)}\mathbf I_{M_b}\right)$, the optimal solution of $\mu _k$ in a closed-form can be determined as
\footnote{
 To obtain the optimal value of $\mu$, we first need to consider the case of $\mu _k = 0, \forall k$ \cite{0Complex}. In this case, if $2\Re \left\{\sum_{i=1}^K{\operatorname{Tr}\left ({\left(\mathbf {W}_i^{\left( {n} \right)}\right)}^{\operatorname{H}}{\mathbf {B}_k}\mathbf {W}_i^{\left(n+1\right)} \right )}\right\}\ge{\mathbf {\ddot e} _k^{\left(n\right)}}$ holds, where $\mathbf {W}_k^{\left(n+1\right)}=\left({{\mathbf {A} +  \tau^{\left(n\right)} \mathbf I_{M_b}}}\right)^{\operatorname{-1}}{\mathbf {S}_k}$, the optimal value of $\mu _k^{\left(n+1\right)}$ is zero. Otherwise, the optimal value of $\mu _k > 0$ can be sequentially determined by employing \eqref{eqmu}.
}
\begin{align}\label{eqmu}
    \mu _k^{\left(n+1\right)}=\frac{{\mathbf {\ddot e} _k}^{\left(n\right)}-2\Re\left\{\sum_{i=1}^K{\operatorname{Tr}\left(\left(\mathbf W_i^{\left(n\right)}\right)^{\operatorname{H}}\mathbf B_k \mathbf {\bar A}^{\operatorname{-1}}\mathbf S_k \right)}\right\}}{2\sum_{i=1}^K{\operatorname{Tr}\left(\left(\mathbf W_i^{\left(n\right)}\right)^{\operatorname{H}}\mathbf B_k\mathbf {\bar A}^{\operatorname{-1}}\mathbf B_k^{\operatorname{H}} \mathbf W_i^{\left(n\right)} \right)}},\forall k.
\end{align}

Next, we propose an efficient sub-gradient based method to obtain the multiplier associated to the constraint \eqref{4a}. Particularly, the multiplier $\tau^{\left(n+1\right)}$ can be updated as following
\begin{align}
    \tau^{\left(n+1\right)}&=\left[\tau^{\left(n\right)}+\upsilon\left(\sum_{k=1}^K{\operatorname{Tr}\left(\mathbf {W}_k^{\operatorname{H}}  \mathbf {W}_k \right)}-P_{\max}\right)\right]^{\operatorname{+}},
\end{align}
where $\upsilon$ represents the step size for updating $\tau$ and $\left[ a \right]^{\operatorname{+}}=\max\left(0,a\right)$ guaranties $\tau^{\left(n+1\right)}$ being positive.


The SCA-based algorithm for optimization at the AP side is given in Algorithm \ref{a1}, the finally solution of ATB by employing Algorithm \ref{a1} can be expressed as $\mathbf W_k^{\left(t+1\right)}\triangleq\mathbf W_k^{\left(\hat n\right)},\forall k$,
where $\hat n$ denotes the maximum number of sub-iterations when Algorithm \ref{a1} converges. Now we have the following Lemma
\begin{lemma}\label{l3}
The sequence of feasible solutions $\left\{\mathbf W^{\left(n\right)}, n=1,2,\cdots, \hat n\right\}$ generated by Algorithm \ref{a1} are optimal solutions of Problem \ref{eq25}. Meanwhile, the final converged solution of Algorithm \ref{a1}, i.e., $\mathbf W^{\left(\hat n\right)}$, also satisfies the KKT conditions of Problem \ref{eq21}.
\end{lemma}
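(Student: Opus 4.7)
The plan is to establish the two claims in sequence: first that each iterate $\mathbf{W}^{(n)}$ produced by Algorithm~\ref{a1} is an optimal primal solution of the convex surrogate Problem~\ref{eq25}, and then that the converged limit $\mathbf{W}^{(\hat n)}$ additionally satisfies the KKT conditions of the original non-convex Problem~\ref{eq21}.

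For the first claim, I would begin by verifying that Problem~\ref{eq25}, with the linearized constraint~\eqref{e27} in place of~\eqref{4b}, is a convex program. The objective $f_5(\mathbf{W})$ is quadratic with Hessian determined by $\mathbf{A}=\sum_k \mathbf{H}_k\bar{\mathbf{U}}_k\mathbf{L}_k\mathbf{L}_k^{\operatorname{H}}\mathbf{H}_k^{\operatorname{H}}\succeq \mathbf{0}$ (each $\bar{\mathbf{U}}_k\succeq\mathbf{I}$); the power budget~\eqref{4a} is quadratic-convex; and the linearization~\eqref{e27} is affine in $\mathbf{W}$. Assuming any strictly feasible point exists (Slater's condition), strong duality holds and the duality gap is zero, so primal optimality is fully characterized by the KKT system with multipliers $(\tau,\mu)$. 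The closed-form~\eqref{e31} is exactly the stationary point of the Lagrangian~\eqref{e29} in $\mathbf{W}$, the expression~\eqref{eqmu} is obtained by enforcing the complementary-slackness identity~\eqref{e27c} on the linearized EH inequality, and the projected subgradient recursion maximizes the dual function in $\tau$. Consequently, at each $n$ the triple $(\mathbf{W}^{(n)},\tau^{(n)},\mu^{(n)})$ satisfies all KKT conditions of Problem~\ref{eq25}, and by zero duality gap $\mathbf{W}^{(n)}$ is primal optimal.

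For the second claim, the essential observation is that the first-order Taylor expansion used in~\eqref{e26}--\eqref{e27} is tight at the expansion point in both function value and gradient: evaluating the RHS of~\eqref{e26} at $\mathbf{W}=\mathbf{W}^{(n)}$ returns exactly $\operatorname{Tr}(\sum_i \mathbf{W}_i^{\operatorname{H}}\mathbf{B}_k\mathbf{W}_i)$, and the Wirtinger gradient with respect to $\mathbf{W}_i$ of the linearized expression at $\mathbf{W}^{(n)}$ coincides with that of the original quadratic term. At convergence $\mathbf{W}^{(\hat n+1)}=\mathbf{W}^{(\hat n)}$ with limiting multipliers $(\tau^{(\hat n)},\mu^{(\hat n)})$. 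I would then verify the four KKT requirements of Problem~\ref{eq21} one by one: primal feasibility of~\eqref{4b} follows from value-tightness of the linearization combined with feasibility of~\eqref{e27}; stationarity follows from gradient-tightness, since the Lagrangian gradient of Problem~\ref{eq21} at $\mathbf{W}^{(\hat n)}$ matches that of~\eqref{e29}; dual feasibility holds because $\tau,\mu_k\geq 0$ are preserved throughout the iterations; and complementary slackness for~\eqref{4b} transfers from~\eqref{e27c} again by value-tightness.

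The main obstacle will be justifying convergence of the coupled recursion, in particular that the subgradient updates of $\tau$ together with the moving SCA linearization point produce a well-defined limit rather than cycling. I would handle this by invoking a diminishing step-size rule $\upsilon^{(n)}\to 0$ with $\sum_n \upsilon^{(n)}=\infty$ and standard subgradient-method arguments, which ensures $\tau^{(n)}$ tracks the dual optimum of the current surrogate; together with the monotonic improvement property of SCA (the linearization is a global lower bound of the EH term), this yields a convergent subsequence whose limit inherits all stationarity, feasibility and slackness conditions. Once convergence is in hand, translating the surrogate KKT into the KKT of Problem~\ref{eq21} is immediate from the tightness of the Taylor expansion, so the technical weight of the proof concentrates on the duality argument of the first part and the gradient/value matching argument at the convergence point.
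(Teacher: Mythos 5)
Your proposal is correct and follows essentially the same route as the paper's own proof: establish that each iterate satisfies the KKT system of the convex surrogate Problem~\ref{eq25} (with zero duality gap), then transfer those conditions to Problem~\ref{eq21} at the fixed point. The paper merely asserts that suitable multipliers $\bar\tau^{(\hat n)},\bar\mu_k^{(\hat n)}$ ``can be readily checked'' to exist, whereas you supply the actual mechanism — value- and gradient-tightness of the first-order Taylor expansion of the EH constraint at the expansion point — so your write-up is a more complete version of the same argument rather than a different one.
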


\begin{proof}
The proof is given in Appendix \ref{app4}.
\end{proof}

\begin{algorithm}[t]
\caption{SCA-based Algorithm for Optimizing ATB} 
\label{a1} 
\begin{algorithmic}[1]
\REQUIRE $\mathbf W_{k} ^{\left ( t \right )},\;\forall k$, threshold $\varepsilon$.
\ENSURE     Optimal value of $ \mathbf W_{k}^{\left(t+1\right)},\;\forall k$\\
\STATE\textbf{Set} $\mathbf W_{k} ^{\left ( 0 \right )}\triangleq\mathbf W_{k} ^{\left ( t \right )},\;\forall k$;\\
\STATE\textbf{Calculate} \\$f_5\left(\mathbf W^{\left(0\right)}\right)=\sum_{k=1}^K{\operatorname{Tr}\left({\left(\mathbf {W}_i^{\left( {0} \right)}\right)}^{\operatorname{H}}\mathbf {A}\mathbf  W_k^{\left(0\right)}- 2{\left(\mathbf {W}_i^{\left( {0} \right)}\right)}^{\operatorname{H}}\mathbf {S}_k\right)}$;\\
\FOR{$n=1 \;\text{to}\; 2,3,\cdots,n_{\max}$}
\STATE \textbf{Update} ${\mathbf {\ddot e} _k^{\left(n\right)}},\forall k$, by using \eqref{e27};
\STATE \textbf{Update} $\mathbf W_{k} ^{\left ( n+1 \right )}, \forall\, k$, by using \eqref{e31};
\IF {${\left|f\left(\mathbf W^{\left(n+1\right)}\right) - f\left(\mathbf W^{\left(n\right)}\right) \right|}/{f\left(\mathbf W^{\left(n\right)}\right)} < \varepsilon_1 $} 
\item {Break\;;}
\ENDIF
\ENDFOR
\STATE {$\mathbf{W}_k^{\left(t+1\right)}\triangleq \mathbf W_k^{\left(\hat n\right)},\;\forall k$}.
\end{algorithmic} 
\end{algorithm}
\subsection{Optimization at the IRS side}
Finally, we consider the optimization at the IRS side. By fixing the obtained solutions of $\mathbf U ^{{t+1}}$, $\rho^{{t+1}}$,  $\mathbf L^{{t+1}}$ and $\mathbf W^{{t+1}}$, the corresponding sub-problem for optimizing the PRB of IRS is given as
\begin{align}\label{eq30}
    \Theta^{\left(\operatorname{t+1}\right)} \triangleq \arg\mathop{\max} _\Theta \quad&{f_4}\left( \rho^{\left(\operatorname{t+1}\right)},{\mathbf W^{\left(\operatorname{t+1}\right)},\Theta,\mathbf L^{\left(\operatorname{t+1}\right)} } \right)\\
    \operatorname{s.t.} \quad&\eqref{4b}-\eqref{4c}.\notag
\end{align}
By substituting ${\mathbf H_k^{\operatorname{H}}}= {\mathbf D_k^{\operatorname{H}}} + {\mathbf R_k^{\operatorname{H}}}\Theta \mathbf F$ into $\mathbf H_{k}^{\operatorname{H}}{\mathbf W_{k}}$ and $\mathbf H_{k}^{\operatorname{H}}{\mathbf W_{i}}\mathbf W_{i}^{\operatorname{H}}{\mathbf H_{k}}$, and defining $\mathbf {\hat W}\triangleq\sum_{j=1}^K{\mathbf W_j\mathbf W_j^{\operatorname{H}}}$, we have 
\begin{align}
    {\mathbf H_k^{\operatorname{H}}}{\mathbf W_k} &= {\mathbf D_k^{\operatorname{H}}}{\mathbf W_k} + {\mathbf R_k^{\operatorname{H}}}\Theta \mathbf F{\mathbf W_k},\notag\\
    {\mathbf H_k^{\operatorname{H}}}\mathbf {\hat W}\mathbf H_k &= \mathbf D_k^{\operatorname{H}}\mathbf {\hat W}\mathbf D_k + {\mathbf D_k^{\operatorname{H}}}\mathbf {\hat W}{\mathbf F^{\operatorname{H}}}\Theta^{\operatorname{H}} \mathbf R_k \notag\\
    &+ {\mathbf R_k^{\operatorname{H}}}\Theta \mathbf F\mathbf {\hat W}\mathbf D_k + {\mathbf R_k^{\operatorname{H}}}\Theta \mathbf F\mathbf {\hat W}{\mathbf F^{\operatorname{H}}}\Theta^{\operatorname{H}} \mathbf R_k. \label{e34}
\end{align} 
Then, by substituting them into Sub-problem \ref{eq30}, and defining ${f_4}\left( \rho^{\left(\operatorname{t+1}\right)},{\mathbf W^{\left(\operatorname{t+1}\right)},\Theta,\mathbf L^{\left(\operatorname{t+1}\right)} } \right)\triangleq f_6\left( \Theta \right)+\operatorname{Consts}\left(\Theta\right)$, where $\operatorname{Consts}\left(\Theta\right)$ denotes the irrelevant constant terms with respect to $\Theta$, i.e., $\sum_{k=1}^K{\operatorname{Tr}\left(\mathbf {\bar U}_k\mathbf D_k^{\operatorname{H}}\mathbf W_k \mathbf L_k^{\operatorname{H}}\right)}$, $\sum_{k=1}^K{\operatorname{Tr}\left(\mathbf {\bar U}_k \mathbf L_k \mathbf W_k^{\operatorname{H}}\mathbf D_k\right)}$, $\sum_{k=1}^K{\operatorname{Tr}\left(\mathbf {\bar U}_k\mathbf L_k   \mathbf D_k^{\operatorname{H}}{\mathbf {\hat W}}\mathbf D_k\mathbf L_k^{\operatorname{H}}\right)}$ and ${\sum _{k = 1}^K {\operatorname{Tr} \left( { {{\mathbf {\bar U} _{k}}}\mathbf L_k\left(\sigma _k^2\mathbf I_{M_u} + {\delta _k^2}/{\rho _k}\mathbf I_{M_u}\right)\mathbf L_k^{\operatorname{H}}}\right)}}$, which have no impact on the update of $\Theta$, and where $f_6\left( \Theta \right)$ can be simplified shown as follows
\begin{align}
    f_6\left( \Theta \right)&= \sum_{k=1}^K{\operatorname{Tr}\left(\mathbf {\bar U}_k\mathbf R_k^{\operatorname{H}}\Theta\mathbf F\mathbf W_k \mathbf L_k^{\operatorname{H}}\right)}\notag\\
    &+\sum_{k=1}^K{\operatorname{Tr}\left(\mathbf {\bar U}_k\mathbf L_k\mathbf W_k^{\operatorname{H}}\mathbf F^{\operatorname{H}}\Theta^{\operatorname{H}}\mathbf R_k\right)}\notag\\
    &-\sum_{k=1}^K{\operatorname{Tr}\left(\mathbf {\bar U}_k\mathbf L_k   {\mathbf D_k^{\operatorname{H}}}\mathbf {\hat W}{\mathbf F^{\operatorname{H}}}\Theta^{\operatorname{H}} \mathbf R_k\mathbf L_k^{\operatorname{H}}\right)}\notag\\
    &-\sum_{k=1}^K{\operatorname{Tr}\left(\mathbf {\bar U}_k\mathbf L_k   {\mathbf R_k^{\operatorname{H}}}\Theta \mathbf F\mathbf {\hat W}\mathbf D_k\mathbf L_k^{\operatorname{H}}\right)}\notag\\
    &-\sum_{k=1}^K{\operatorname{Tr}\left(\mathbf {\bar U}_k\mathbf L_k   {\mathbf R_k^{\operatorname{H}}}\Theta \mathbf F\mathbf {\hat W}{\mathbf F^{\operatorname{H}}}\Theta^{\operatorname{H}} \mathbf R_k\mathbf L_k^{\operatorname{H}}\right)}.
\end{align}
Consequently, we have the following problem, which is equivalent with Problem \ref{eq30}
\begin{align}\label{eqthetaori}
     \mathop{\max} _\Theta \quad&{f_6}\left(\Theta\right)\\
    \operatorname{s.t.} \quad&\eqref{4b}-\eqref{4c}.\notag
\end{align}
To obtain the solution of $\Theta$, we transform Problem \ref{eqthetaori} to an equivalent QP form under constant modulus constraints. First, by defining $\mathbf{T}_{k}=\mathbf R_k\mathbf L_k^{\operatorname{H}}\mathbf {\bar U}_k{\mathbf L_k}{\mathbf R_k^{\operatorname{H}}}$, $\mathbf{Q}=\mathbf F\mathbf {\hat W}{\mathbf F^{\operatorname{H}}}$, we have
\begin{align}
    \operatorname{Tr}\left( {{\mathbf {\bar U}_k}{\mathbf L_k}{\mathbf R_k^{\operatorname{H}}}\Theta \mathbf F\mathbf {\hat W}{\mathbf F^{\operatorname{H}}}{\Theta ^{\operatorname{H}}}\mathbf R_k \mathbf L_k^{\operatorname{H}}} \right) 
    \triangleq \operatorname{Tr}\left( {{\Theta ^{\operatorname{H}}}{\mathbf T_k}\Theta \mathbf Q} \right).
\end{align}
Then, by defining $\mathbf A_{k}=\mathbf F \mathbf {\hat W} \mathbf D_k\mathbf L_k^{\operatorname{H}}\mathbf {\bar U}_k\mathbf L_k\mathbf R_k^{\operatorname{H}}$, we have
\begin{align}
    \operatorname{Tr}\left( {{\mathbf {\bar U}_k}{\mathbf L_k}{\mathbf D_k^{\operatorname{H}}}\mathbf {\hat W}{\mathbf F^{\operatorname{H}}}{\Theta ^{\operatorname{H}}}\mathbf R_k \mathbf L_k^{\operatorname{H}}} \right) 
    \triangleq \operatorname{Tr}\left( {{\Theta ^{\operatorname{H}}}{\mathbf A_k^{\operatorname{H}}}} \right).
\end{align}
Next, by defining ${\mathbf Z_{k}}=\mathbf F{\mathbf W_k}{\mathbf {\bar U}_k}{\mathbf L_k}{\mathbf R_k^{\operatorname{H}}}$, we have
\begin{align}
    \operatorname{Tr}\left( {{\mathbf {\bar U}_k}{\mathbf R_k^{\operatorname{H}}}\Theta \mathbf F{\mathbf W_k}{\mathbf L_k^{\operatorname{H}}}} \right) 
    \triangleq \operatorname{Tr}\left( {\Theta {\mathbf Z_k}} \right).
\end{align}
Similar, by defining $\mathbf J_k=\mathbf R_k\mathbf R_k^{\operatorname{H}}$, $\Lambda _k = \mathbf F\mathbf {\hat W} \mathbf D_k\mathbf R_k^{\operatorname{H}}$, the constraint \eqref{4b} can be expressed as 
\begin{align}
    \operatorname{Tr}\left( {{\Theta ^{\operatorname{H}}}{\mathbf J_k}\Theta \mathbf Q} \right) + \operatorname{Tr}\left( {{\Theta ^{\operatorname{H}}}\Lambda _k^{\operatorname{H}}} \right) + \operatorname{Tr}\left( {\Theta {\Lambda _k}} \right) \notag\\
    \ge \mathbf {\dot e}_k - \operatorname{Tr}\left( {\mathbf D_k^{\operatorname{H}}\mathbf {\hat W}{\mathbf D_k}} \right) \triangleq \mathbf {\dddot e}_k.\label{e43}
\end{align}
Therefore, by defining $\mathbf T = {\sum_{k = 1}^K {{\mathbf T_{k}}} } $, $\mathbf C = {\sum_{k = 1}^K {{\mathbf A_{k}}-{\mathbf Z_{k}}} } $, an equivalent problem is defined as
\begin{align}\label{eq37}
    \mathop {\min }_\Theta \quad & \operatorname{Tr}\left( {{\Theta ^{\operatorname{H}}}\mathbf T\Theta \mathbf Q} \right) + \operatorname{Tr}\left( {{\Theta ^{\operatorname{H}}}{\mathbf C}^{\operatorname{H}}} \right) + \operatorname{Tr}\left( {\Theta \mathbf C} \right) \\
    \operatorname{s.t.} \quad & \eqref{e43},\;\eqref{4c}.\notag
\end{align}
After invoking some further matrix operations \cite{Ben2012Lectures}, and by defining $\theta = \left( \Theta _{1,1},\Theta _{2,2},\cdots,\Theta _{N,N}  \right)^{\operatorname{T}}$, $\Omega\triangleq{\mathbf T \odot {\mathbf Q^{\operatorname{T}}}}$ and $ \mathbf c = \left(  \mathbf C _{1,1},\mathbf C _{2,2},\cdots,\mathbf C _{N,N} \right)^{\operatorname{T}}$, we arrive at $\operatorname{Tr}\left( {{\Theta ^{\operatorname{H}}}\mathbf T\Theta \mathbf Q} \right)\triangleq {\theta^{\operatorname{H}}}\Omega \theta$, $\operatorname{Tr}\left( {\Theta^{\operatorname{H}} \mathbf C^{\operatorname{H}}} \right){ = }{\theta^{\operatorname{H}}}\mathbf c^*$ and $\operatorname{Tr}\left( {\Theta \mathbf C} \right) = {\mathbf c^{\operatorname{T}}}\theta$. 
Similarly, the set of constraints \eqref{e43} are equivalent to 
\begin{align}
    {\theta ^{\operatorname{H}}}{\mathbf {\bar J}_k}\theta  + 2\Re \left\{ {{\theta ^{\operatorname{H}}}{\lambda_k ^ * }} \right\} \ge {\mathbf {\dddot e}_k}, \forall k,\label{e46}
\end{align}
where $\lambda _k= {\left( \left[ \Lambda _k \right]_{1,1},\left[ \Lambda _k \right]_{2,2},\cdots,\left[ \Lambda _k \right]_{N,N}\right)^{\operatorname{T}}}$ and ${\mathbf {\bar J}_k} = {\mathbf J_k} \odot {\mathbf Q^{\operatorname{T}}}$. It can be checked that $\mathbf {\bar J}_k, \forall k$ and $\Omega$ are semi-definite matrices.
As for constraint \eqref{4c}, which can be written as 
\begin{align}
    \left|\theta_n\right|  = \frac{1}{\alpha}, \forall n. \label{e48}
\end{align}
Then, we arrive at
\begin{align}
    \mathop {\min }_\theta & \quad f_7\left(\theta\right)\triangleq{\theta ^{\operatorname{H}}}\Omega \theta  + 2\Re \left\{ {{\theta ^{\operatorname{H}}}{c^ * }} \right\} \label{eqbeforesca}\\
    \operatorname{s.t.}  &\quad \eqref{e46},\;\eqref{e48}.\notag 
\end{align}

In $q$-th sub-iterations for updating PRB, similar to the operation in \eqref{e26}, we can approximate \eqref{e46} to its first-order Taylor expansion, which leads to ${\theta ^{\operatorname{H}}}{\mathbf {\bar J}_k}\theta  \ge  - \left({\theta ^{\left( {q} \right)}}\right)^{\operatorname{H}}{\mathbf {\bar J}_k}{\theta ^{\left( {q } \right)}} + 2\Re \left\{ {{\theta ^{\operatorname{H}}}{\mathbf {\bar J}_k}{\theta ^{\left( {q} \right)}}} \right\}$. Hence, we have 
\begin{align}
    2\Re \left\{ {{\theta ^{\operatorname{H}}}\left( {{\lambda_k ^ * } + {\mathbf {\bar J}_k}{\theta ^{\left( {q} \right)}}} \right)} \right\} \ge {\mathbf {\dddot e}_k} + \left({\theta ^{\left( {q} \right)}}\right)^{\operatorname{H}}{\mathbf {\bar J}_k}{\theta ^{\left( {q} \right)}} \triangleq \mathbf{\hat e}_k^{\left( {q} \right)}.\label{e47}
\end{align}
Therefore, the problem becomes
\begin{align}
    \mathop {\min }_\theta & \quad f_7\left(\theta\right) \label{e49}\\
    \operatorname{s.t.}  &\quad \eqref{e48},\;\eqref{e47}.\notag 
\end{align}
Although the objective function $f_7\left(\theta\right)$ is convex and the constraint \eqref{e47} is linear, Problem \ref{e49} is still a non-convex QP problem due to the constant modulus constraints in \eqref{e48}. 

Here, we handle the above problem based on the well-known MM approach to obtain optimal PRB in a nearly closed-form with low complexity. 
The key to the success of MM algorithm lies in constructing a sequence of convex surrogate functions as the bound of $f_7\left(\theta\right)$ in Problem \ref{e49}. As a benefit, we design the surrogate function by employing the second-order Taylor expansion, which can replace $\Omega$ by desired structures (e.g., diagonal matrix). Hence, we have the following inequality
\begin{align}
    {\theta ^{\operatorname{H}}}\Omega \theta  \le {\theta ^{\operatorname{H}}}\Xi \theta  &- 2\Re \left\{ {{\theta ^{\operatorname{H}}}\left( {\Xi  - \Omega } \right){\theta ^{\left( {q} \right)}}} \right\} \notag\\
    &+ \left({\theta ^{\left( {q} \right)}}\right)^{\operatorname{H}}\left( {\Xi  - \Omega } \right){\theta ^{\left( {q} \right)}},\label{e50}
\end{align}
where $\Xi  = {\omega _{\max }}\mathbf I_{N}$ with $\omega _{\max }$ denotes the maximum eigenvalue of $\Omega$. Clearly, the inequality \eqref{e50} holds when $\Xi \succeq \Omega$ and the equality is achieved at $\theta=\theta^{\left(q\right)}$. Hence, we arrive that
\begin{align}
    g\left( {\theta | \theta^{\left( {q} \right)} } \right) 
    =\theta ^{\operatorname{H}}\Xi \theta&+2\Re \left \{ \theta ^{\operatorname{H}}\left( {\Omega  \theta^{\left( {q} \right)}  - \Xi  \theta ^{\left( {q} \right)} + {\mathbf c^ * }} \right)\right \}\notag\\
    &+ {{ \left(\theta^{\left( {q} \right)} \right)}^{\operatorname{H}}}\left( {\Xi  - \Omega } \right) \theta^{\left( {q} \right)} .
\end{align}
Clearly, $g\left(\theta| \theta ^{\left( {q } \right)}\right)$ satisfies the following conditions: $\left(a\right)$\;$g\left( {\theta | \theta^{\left( {q} \right)} } \right)$ is an upper-bounded function of $f_7\left(\theta \right)$; $\left(b\right)$\;$g\left( {\theta  | \theta^{\left( {q} \right)} } \right)$ and $f_7\left(\theta\right)$ have the same solution when $\theta=\theta^{\left( {q} \right)}$; $\left(c\right)$\;$g\left( {\theta  | \theta^{\left( {q} \right)} } \right)$ and $f_7\left(\theta \right)$ achieve a same gradient at $\theta^{\left( {q} \right)}$.
The proof can be obtained by employing Example 13 in \cite{MM}. 
Since $\theta^{\operatorname{H}} \theta=\frac{N}{\alpha}$ holds, we have $\theta^{\operatorname{H}}\Xi\theta=\frac{N\omega _{\max}}{\alpha}$, which is a constant with respect to $\theta$. 

Therefore, by defining $\mathbf r^{\left(q\right)} = \left(\omega_{\max}\mathbf I_{N}-\Omega \right)\theta^{\left(q\right)}-\mathbf c^*$ and omitting the irrelevant constant terms with respect to $\theta$, i.e., ${{ \left(\theta^{\left( {q} \right)} \right)}^{\operatorname{H}}}\left( {\Xi  - \Omega } \right) \theta^{\left( {q} \right)}$ and $\theta^{\operatorname{H}}\Xi\theta$, we have the following problem
\begin{align}
    \mathop {\max }_\theta & \quad 2\Re \left\{{\theta ^{\operatorname{H}}\mathbf r^{\left(q\right)}}\right\} \label{e52}\\
    \operatorname{s.t.}  &\quad \eqref{e48}, \eqref{e47}.\notag
\end{align}
However, due to the non-convex constant modulus constraints in \eqref{e48}, the dual gap is not guaranteed to be zero, and the Lagrangian dual method cannot be used directly. In following, we provide a penalty-based method. It is worth noting that there always exists a vector $\chi=[\chi_1,\chi_2,\cdots,\chi_K]$, each element of which satisfies $\chi_k \ge 0 $, and Problem \ref{e52} can be reformulated as
\begin{align}
    \mathop {\max }_\theta & \quad 2\Re \left\{{\theta ^{\operatorname{H}}\mathbf r^{\left(q\right)}}\right\}+2\sum_{k=1}^K{\chi_k\Re \left\{ {{\theta ^{\operatorname{H}}}\left( {{\lambda_k ^ * } + {\mathbf {\bar J}_k}{\theta ^{\left( {q} \right)}}} \right)} \right\}} \label{e53}\\
    \operatorname{s.t.}  &\quad \eqref{e48}.\notag
\end{align}
It can be checked that Problem \ref{e52} and \ref{e53} are  equivalent, and have the same optimal solution. Interestingly, the objective function and the constraints in Problem \ref{e53} are separable with respect to $\theta_n,\forall n$. Therefore, we can solve Problem \ref{e53} by solving $N$ separate problems in parallel. Let $\mathbf f^{\left({q}\right)}=r^{\left(q\right)}+\sum_{k=1}^{K}{\chi_k\left({{\lambda_k ^ * } + {\mathbf {\bar J}_k}{\theta ^{\left( {q} \right)}}}\right)}$, clearly, 
Problem \ref{e53} can be further rewritten as 
\begin{align}\label{eq47}
    \mathop {\max }_{\arg\theta_n} &  \cos{\left(\arg\theta_n-\arg \mathbf f^{\left({q}\right)}_n\right)}, \forall n.
\end{align}
Problem \ref{eq47} is maximized when the phases of $\theta_n$ and $\mathbf f^{\left({q}\right)}_n$ are equal. Therefore, the optimal solution of Problem \ref{eq47} can be sequentially obtained as following
\begin{align}\label{e54}
    \arg\theta^{\left({q+1}\right)}_n  = \arg  \mathbf f _n^{\left({q}\right)}, \forall n .
\end{align}
Meanwhile, the penalty parameter of $\chi^{\left(q+1\right)}$ can be updated by using the ellipsoid method \cite{2004Convex}.

The details of optimizing PRB is summarized in Algorithm \ref{a2}. It can be verified that the sequence of the solutions generated by Algorithm \ref{a2}, i.e., $\left\{\theta^{\left(q\right)}, q=1,2,\cdots,\hat q\right\}$, are the optimal solutions of Problem \ref{e52}, where $\hat q$ represents the maximum number of iterations when algorithm converges. 
Consequently, we have the following lemma
\begin{lemma}\label{lemma5}
The final converged solution of Algorithm \ref{a2}, i.e., $\theta ^{\left(\hat q\right)}$, satisfies the KKT conditions of Problem \ref{eqbeforesca}.
\end{lemma}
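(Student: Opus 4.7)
The plan is to verify the four KKT conditions of Problem \ref{eqbeforesca} at the limit point $\theta^{(\hat q)}$: stationarity of the Lagrangian, primal feasibility of \eqref{e46} and \eqref{e48}, dual feasibility $\chi_k \ge 0$, and complementary slackness. I would first write out the Lagrangian of Problem \ref{eqbeforesca} with multipliers $\chi_k$ for the minimum-EH constraints \eqref{e46} and real multipliers $\nu_n$ for the unit-modulus equality constraints \eqref{e48}, and derive the complex-gradient stationarity condition
\begin{align}
\Omega \theta + \mathbf c^* - \sum_{k=1}^K \chi_k\bigl(\bar{\mathbf J}_k \theta + \lambda_k^*\bigr) + \mathbf D_\nu \theta = \mathbf 0,
\end{align}
where $\mathbf D_\nu = \operatorname{diag}(\nu_1,\dots,\nu_N)$. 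With this target in hand, the proof reduces to exhibiting nonnegative $\chi_k^{(\hat q)}$ and real $\nu_n^{(\hat q)}$ that satisfy this identity together with the remaining KKT requirements.

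Primal feasibility with respect to \eqref{e48} is immediate, since the phase-only update \eqref{e54} preserves $|\theta_n| = 1/\alpha$ by construction. Primal feasibility with respect to \eqref{e46} will follow from the tangency of the first-order Taylor expansion used in \eqref{e47}: the right-hand side of \eqref{e47} is a global lower bound on the left-hand side of \eqref{e46} which becomes exact at $\theta = \theta^{(q)}$; once the iterates stabilise at $\theta^{(\hat q+1)} = \theta^{(\hat q)}$, feasibility of the linearised constraint at the current iterate is equivalent to feasibility of the original constraint. Dual feasibility $\chi_k^{(\hat q)} \ge 0$ holds by the penalty construction of \eqref{e53}, while complementary slackness will be inherited from the update rule of the ellipsoid method, which drives $\chi_k$ to zero precisely on the inactive constraints.

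The heart of the argument is stationarity, and here I would exploit the two built-in tangency properties of the algorithm. The MM surrogate $g(\theta \mid \theta^{(q)})$ agrees with $f_7(\theta)$ in value and in gradient at $\theta = \theta^{(q)}$, and the SCA linearisation \eqref{e47} shares value and gradient with the original constraint at $\theta = \theta^{(q)}$; together these imply that any stationary point of the penalised inner problem \ref{e53} at $\theta = \theta^{(q)}$ is also a stationary point of the original Lagrangian at the same multipliers. Concretely, the phase-alignment rule \eqref{e54} at convergence gives $\theta_n^{(\hat q)} = t_n \mathbf f_n^{(\hat q)}$ for some positive real $t_n$, i.e.\ $\mathbf f^{(\hat q)} = \mathbf T^{-1}\theta^{(\hat q)}$ with $\mathbf T = \operatorname{diag}(t_1,\dots,t_N)$. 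Substituting the definition $\mathbf f^{(\hat q)} = (\omega_{\max}\mathbf I_N - \Omega)\theta^{(\hat q)} - \mathbf c^* + \sum_k \chi_k^{(\hat q)}(\lambda_k^* + \bar{\mathbf J}_k \theta^{(\hat q)})$ and rearranging recovers the stationarity condition above with $\nu_n^{(\hat q)} = t_n^{-1} - \omega_{\max}$, which is real as required.

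The main obstacle I anticipate is the last step: rigorously justifying that the penalty multipliers $\chi^{(\hat q)}$ produced by the ellipsoid routine coincide with valid Lagrange multipliers of \eqref{e46} at the limit point, so that complementary slackness $\chi_k^{(\hat q)}\bigl(\dddot{\mathbf e}_k - (\theta^{(\hat q)})^{\operatorname{H}}\bar{\mathbf J}_k \theta^{(\hat q)} - 2\Re\{(\theta^{(\hat q)})^{\operatorname{H}}\lambda_k^*\}\bigr) = 0$ holds. This will require combining the standard convergence guarantee of the ellipsoid method for the convex per-iteration dual with the fact that the SCA linearisation becomes exact at the fixed point; both ingredients are available, but carefully passing from the iterate-level equalities to the limit is the delicate part.
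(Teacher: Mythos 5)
Your proposal follows essentially the same route as the paper's proof: both arguments rest on primal feasibility of the modulus constraint being preserved by the phase-only update \eqref{e54}, on the SCA linearisation \eqref{e47} and the MM surrogate sharing value and gradient with the original constraint and objective at the fixed point, and on the existence of multipliers making the KKT system of Problem \ref{eqbeforesca} hold at $\theta^{(\hat q)}$. In fact your stationarity step is more explicit than the paper's, which merely asserts that suitable multipliers ``must exist'' while you construct them via $\nu_n = t_n^{-1}-\omega_{\max}$ from the fixed-point relation $\theta_n^{(\hat q)} = t_n \mathbf f_n^{(\hat q)}$; the complementary-slackness issue you flag as delicate is likewise left unaddressed in the paper's own argument, so your proposal is at least as complete as the published proof.
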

\begin{proof}
The proof is presented in Appendix \ref{applemma5}.
\end{proof}

\begin{algorithm}[t] 
    \caption{Penalty-based MM/SCA Algorithm for Optimizing PRB} 
\label{a2} 
\begin{algorithmic}[1]
\REQUIRE $\theta ^{\left ( t \right )}$, threshold $\varepsilon_2$.
\ENSURE Optimal value of $\theta^{\left(t+1\right)}$
\STATE \textbf{Set} $\theta ^{\left ( 0 \right )}\triangleq\theta ^{\left ( t \right )}$;
\STATE \textbf{Calculate} $f_7\left(\theta^{\left ( 0 \right )}\right)=\left({\theta ^{\left( {0} \right)}}\right)^{\operatorname{H}}\Omega \theta^{\left ( 0 \right )}  + 2\Re \left\{ {{\left({\theta ^{\left( {0} \right)}}\right)^{\operatorname{H}}}{c^ * }} \right\}$;\\
\FOR{$q=1 \;\text{to}\; 2,3,\cdots,q_{\max}$}
\STATE \textbf{Update} $\mathbf{\hat e}_k^{\left( {q} \right)}={\mathbf {\dddot e}_k^{\left( {q} \right)}} + {\left({\theta ^{\left( {q} \right)}}\right)^{\operatorname{H}}}{\mathbf {\bar J}_k}{\theta ^{\left( {q} \right)}}$, by using \eqref{e47};
\STATE \textbf{Update} $\theta ^{\left ( q+1 \right )}$, by using \eqref{e54};
\IF {${\left|f_7\left(\theta^{\left(q+1\right)}\right) - f_7\left(\theta^{\left(q\right)}\right) \right|}/{f_7\left(\theta^{\left(q\right)}\right)} < \varepsilon_2 $} 
\item {Break\;;}
\ENDIF
\ENDFOR
\STATE {$\theta^{\left(t+1\right)}\triangleq\theta^{\left(\hat q\right)}$}.
\end{algorithmic} 
\end{algorithm}


\subsection{Convergence and Complexity of the Overall Algorithm}

The details of the proposed AO-based joint PS ratios, ATB and PRB optimization (JPSAPBO) algorithm are provided in Algorithm \ref{a3}. 
The proposed algorithm is guaranteed to converge. Particularly, based on the fact of LDT and FPT methods, we have 
\begin{align}
    \mathcal{R}\left( {{\rho, \mathbf W},{\Theta }} \right) &= {f_1}\left( {\rho, {\mathbf W},{\Theta},{\mathbf U}} \right)\notag\\
    &={f_3}\left( {\rho, {\mathbf W},{\Theta },{\mathbf U}, \mathbf L} \right).
\end{align}
In $t${-th} iteration, we have the following inequalities
\begin{align}
    &{f_3}\left( {\rho^{\left(t\right)}, {\mathbf W^{\left(t\right)}},{\Theta^{\left(t\right)} },{\mathbf U}^{\left(t\right)}, \mathbf L^{\left(t\right)}} \right) \notag\\
    \mathop  \le^{\left( a \right)} & {f_3}\left( {\rho^{\left(t\right)}, {\mathbf W^{\left(t\right)}},{\Theta^{\left(t\right)} },{\mathbf U}^{\left(t+1\right)}, \mathbf L^{\left(t\right)}} \right)\notag\\
    \triangleq&{f_4}\left( {\rho^{\left(t\right)}, {\mathbf W^{\left(t\right)}},{\Theta^{\left(t\right)} }, \mathbf L^{\left(t\right)}} \right) + \operatorname{Const}\left(\mathbf {\bar U}\right)\notag\\
    \mathop  \le^{\left( b \right)}&{f_4}\left( {\rho^{\left(t+1\right)}, {\mathbf W^{\left(t\right)}},{\Theta^{\left(t\right)} }, \mathbf L^{\left(t\right)}} \right) + \operatorname{Const}\left(\mathbf {\bar U}\right)\notag\\
    \mathop  \le^{\left( c \right)}&{f_4}\left( {\rho^{\left(t+1\right)}, {\mathbf W^{\left(t\right)}},{\Theta^{\left(t\right)} }, \mathbf L^{\left(t+1\right)}} \right) + \operatorname{Const}\left(\mathbf {\bar U}\right)\notag\\
    \triangleq&-{f_5}\left({\mathbf W^{\left(t\right)}} \right) + \operatorname{Const}\left( {\rho},\mathbf {L},\mathbf {\bar U} \right)\notag\\
    \mathop  \le^{\left( d \right)}& -{f_5}\left({\mathbf W^{\left(t+1\right)}} \right) + \operatorname{Const}\left( {\rho},\mathbf {L},\mathbf {\bar U} \right).
\end{align}
where the inequalities, e.g., $\left( a \right)$ holds based on the fact that the optimality of $\mathbf U^{\left(t+1\right)}$ has been proved in \eqref{eq13} when the other variables are fixed. Similarly, the inequalities of $\left( b \right)$ and $\left( c \right)$ for optimizing $\rho$ and $\mathbf L$, respectively, can also be verified. The inequality $\left( d \right)$ holds by employing Algorithm \ref{a1}.

Moreover, based on the conditions of MM approach, we know that $g\left(\theta^{\left( {t+1 } \right)}| \theta ^{\left( {t } \right)}\right) \le g\left(\theta^{\left( {t } \right)}| \theta ^{\left( {t } \right)}\right)$, $g\left(\theta^{\left( {t } \right)}| \theta ^{\left( {t } \right)}\right) = f_7\left(\theta^{\left( {t } \right)} \right)$, $g\left(\theta^{\left( {t+1 } \right)}| \theta ^{\left( {t } \right)}\right) \ge f_7\left(\theta^{\left( {t+1 } \right)} \right)$, and $f_7\left(\theta^{\left( {t+1 } \right)} \right)\le f_7\left(\theta^{\left( {t } \right)} \right)$. Hence, we have the following inequality
\begin{align}
    &{f_3}\left( {\rho^{\left(t+1\right)}, {\mathbf W^{\left(t+1\right)}},{\Theta^{\left(t\right)} },{\mathbf U}^{\left(t+1\right)}, \mathbf L^{\left(t+1\right)}} \right) \notag\\
    \triangleq& {f_6}\left({ \Theta^{\left(t\right)}} \right) + \operatorname{Const}\left( {\rho},\mathbf W,\mathbf {\bar U},\mathbf {L} \right)\notag\\
    \triangleq&-{f_7}\left({ \theta^{\left(t\right)}} \right) + \operatorname{Const}\left( {\rho},\mathbf W,\mathbf {\bar U},\mathbf {L} \right)\notag\\
    \le& -{f_7}\left({ \theta^{\left(t+1\right)}} \right) + \operatorname{Const}\left( {\rho},\mathbf W,\mathbf {\bar U},\mathbf {L} \right).
\end{align}
Above inequalities verify that $\mathcal{R}\left( {{\rho, \mathbf W},{\Theta }} \right)$ is monotonically non-decreasing after each updating step. In short, it can be concluded that
\begin{align}\label{eq56}
    \mathcal{R}\left( {{\rho^{\left(t+1\right)}, \mathbf W^{\left(t+1\right)}},{\Theta^{\left(t+1\right)} }} \right)\ge \mathcal{R}\left( {{\rho^{\left(t\right)}, \mathbf W^{\left(t\right)}},{\Theta^{\left(t\right)} }} \right)\notag\\
    \ge \cdots \ge \mathcal{R}\left( {{\rho^{\left(1\right)}, \mathbf W^{\left(1\right)}},{\Theta^{\left(1\right)} }} \right).
\end{align}
The inequality in \eqref{eq56} investigate that $\mathcal{R}\left( {{\rho, \mathbf W},{\Theta }} \right)$ is monotonically non-decreasing over iterations.
In addition, because of the constraints in \eqref{4a}-\eqref{4d},  the data rate $\mathcal{R}\left( {{\rho, \mathbf W},{\Theta }} \right)$ has an upper bound. As the number of iterations increases, we have $\mathcal{R}\left( {{\rho^{\left(\hat t\right)}, \mathbf W^{\left(\hat t\right)}},{\Theta^{\left(\hat t\right)} }} \right) \triangleq  \mathcal{R}\left( {{\rho^{\operatorname{opt}}, \mathbf W^{\operatorname{opt}}},{\Theta^{\operatorname{opt}}}} \right)$, where $\hat t$ denotes the maximum number of iterations when Algorithm \ref{a3} converges. Therefore, we complete the proof of the strict convergence of Algorithm \ref{a3}.

Meanwhile, the complexity analysis of Algorithm \ref{a3} is given as below.
In step 4-6, the complexities of updating $\mathbf U_k$, $\rho _k$ and $\mathbf L_k$ are $\mathcal O \left(M_u^3\right)$, $\forall k$, respectively. In step 7, the complexity of calculating $\mathbf W _k, \; \forall k$ by employing Algorithm \ref{a1} is $\mathcal O \left(\mathcal I_{1}M_b^3\right)$, where $\mathcal I_{1}$ denotes the number of sub-iterations when Algorithm \ref{a1} converges. In step 8, the complexity of calculating the maximum eigenvalue of $\Omega$ is $\mathcal O \left(N^3\right)$ and the complexity of updating $\theta$ is $\mathcal O \left(\mathcal I_{2} N^2\right)$, where $\mathcal I_{2}$ stands for the number of sub-iterations when Algorithm \ref{a2} converges. Therefore, the total complexity of Algorithm \ref{a3} is $\mathcal O \left(\mathcal I_{total}\left(K\left(3M_u^3+\mathcal I_{1}M_b^3\right)+\mathcal I_{2} N^2+N^3\right)\right)$, where $\mathcal I_{total}$ is the number of iterations when Algorithm \ref{a3} converges.

\begin{algorithm}[t]
\caption{AO-based JPSAPBO Algorithm} 
\label{a3} 
\begin{algorithmic}[1]
\REQUIRE    
            $\mathbf \rho_{k} ^{\left ( 1 \right )},\forall k$,
            $\mathbf W_{k} ^{\left ( 1 \right )},\forall k$,
             $\Theta ^{\left ( 1 \right )}$;
            threshold $\varepsilon $.
\ENSURE     Optimal value of $ \mathbf \rho_{k}^{\operatorname{opt}}$, $ \mathbf W_{k}^{\operatorname{opt}},\forall k$; $ \Theta^{\operatorname{opt}}.$\\
\STATE\textbf{Calculate} $\Gamma_{k} ^{\left ( 1 \right )}, \forall\, k$;\\
\STATE\textbf{Calculate} $\mathcal{R}^{\left( 1 \right )}\left(\rho^{\left ( 1 \right )},\mathbf W^{\left ( 1 \right )},\Theta^{\left ( 1 \right )}\right)$;\\
\FOR{$t=1 \;\text{to}\; 2,3,\cdots,t_{\max}$ }
\STATE \textbf{Update} $\mathbf U_{k} ^{\left ( t+1 \right )}, \forall\, k$, by using \eqref{eq14};
\STATE \textbf{Update} $\mathbf \rho_{k} ^{\left ( t+1 \right )}, \forall\, k$, by using \eqref{eq17};
\STATE \textbf{Update} $\mathbf L_{k} ^{\left ( t+1 \right )}, \forall\, k$, by using \eqref{eq18};
\STATE \textbf{Update} $\mathbf W_{k} ^{\left ( t+1 \right )}, \forall\, k$, by employing Algorithm \ref{a1};
\STATE \textbf{Update} $\Theta ^{\left ( t+1 \right )}$, by employing Algorithm \ref{a2};
 \IF {$\frac{{\left| {{\mathcal{R}^{\left( t+1 \right)}}\left( {\rho,\mathbf W,\Theta } \right) - {\mathcal{R}^{\left( {t} \right)}}\left( {\rho,\mathbf W,\Theta } \right)} \right|}}{{{\mathcal{R}^{\left( t \right)}}\left( {\rho,\mathbf W,\Theta } \right)}} < \varepsilon_3 $} 
\item Break\;;
 \ENDIF
 \ENDFOR
 \STATE $ \mathbf \rho_{k}^{\operatorname{opt}} \triangleq \mathbf \rho_{k}  ^{\left ( \hat t \right )}$,$ \mathbf W_{k}^{\operatorname{opt}} \triangleq \mathbf W_{k}  ^{\left (\hat t \right )},\forall k$; $ \Theta^{\operatorname{opt}} \triangleq \Theta  ^{\left ( \hat t \right )}.$
\end{algorithmic} 
\end{algorithm}
\section{numerical simulation results}
In this section, numerical simulation results are provided to evaluate the performance of the proposed AO-based JPSAPBO algorithm.  
It is assumed that there is a uniform linear array (ULA) at the AP side, and there is a uniform rectangular array (URA) at the IRS side.
The total number of reflective elements of the IRS is assumed to be $N=N_xN_y$, where $N_h$ and $N_v$ denote the numbers of reflecting elements along horizon and vertical, respectively. 
Since the AP, the IRS and the PSRs are closed to each other, the small-scale channel fading is assumed to be Rician fading, and the channels, i.e., the AP-IRS channel, the AP-PSRs channels and the IRS-PSRs channels can be mathematically modeled as
\begin{align}
    {\mathbf {\hat H}_i} =  {\sqrt {\frac{{{\beta_i }}}{{{\beta_i } + 1}}} {\mathbf {\hat H}_i}^{\operatorname{LOS}}  + \sqrt {\frac{1}{{{\beta_i } + 1}}} {\mathbf {\hat H}_i}^{\operatorname{NLOS}}},
\end{align}
where $\beta_i$ stands for the Rician factor, and ${\mathbf {\hat H}_i}^{\operatorname{LOS}}$ is the deterministic line of sight (LOS), 
$\mathbf{\hat H}_i^{\operatorname{NLOS}}$ is the non-line-of-sight (NLOS) component, which follow Rayleigh fading model, with $i\in\operatorname{\left\{AP-IRS, AP-PSRs, IRS-PSRs\right\}}.$ 
Meanwhile, the distance-dependent large-scale path loss is given as
\begin{align}
    \mathcal{L}\left( {{d}} \right) = {C_0}{\left( {\frac{{{d}}}{{{D_0}}}} \right)^{ - \chi_j }},
\end{align}
where $C_0$ corresponds to the path loss at the reference distance of $D_0=1$m, $d$ is the link distance and $\chi_j$ denotes path loss exponent, where $j\in\operatorname{\left\{direct, relate\right\}}$. 
The other simulation parameters are summarized in Table \ref{tabaaa}, unless otherwise stated.

\begin{table}[t]
    \centering
    \caption{Simulation Parameters}
    \begin{tabular}{c|c}
    \hline 
    \textbf{Parameters}&\textbf{Values}  \\ \hline
    Cell coverage &5 m\\ \hline
    AP location &$\left(0\operatorname{m},0\operatorname{m}\right)$\\\hline
    IRS location &$\left(5\operatorname{m},5\operatorname{m}\right)$\\\hline
    Center of PSRs location &$\left(5\operatorname{m},0\operatorname{m}\right)$\\\hline
    Number of antennas of the AP&$M_b=8$  \\ \hline
    Number of antennas of each PSR&$M_u=2$  \\ \hline
    Number of PSRs&$K=4$  \\ \hline
    Number of reflective elements&$N=30$  \\ \hline
    Path loss at the reference distance&$C_0=-30$ dB  \\ \hline
    The maximum transmission power&$P_{\max}=10$ W  \\ \hline
    Path loss exponent of direct channel&$\chi_{\operatorname{direct}}=3.6$ \\ \hline
    Path loss exponent of IRS-related channel&$\chi_{\operatorname{relate}}=2.2$ \\ \hline
    Rician factor of channels&$\beta=5$ dB \\ \hline
    Antenna noise power & $\sigma_k^2=\sigma^2=-50$ dBm \\ \hline
    Signal Processing noise power& $\delta_k^2=\delta^2=-40$ dBm \\ \hline
    Minimum harvested power requirement & $\mathbf{\bar e} _k=\mathbf{\bar e}=0.5$ mW \\ \hline
    Energy conversion efficiency & $\eta_k=\eta=0.7$\\ \hline
    Threshold&$\varepsilon_{1,2,3}=10^{-6}$ \\ \hline
    \end{tabular}
    \label{tabaaa}
\end{table}

\begin{figure}[t]
    \centering
    \scalebox{0.55}{\includegraphics{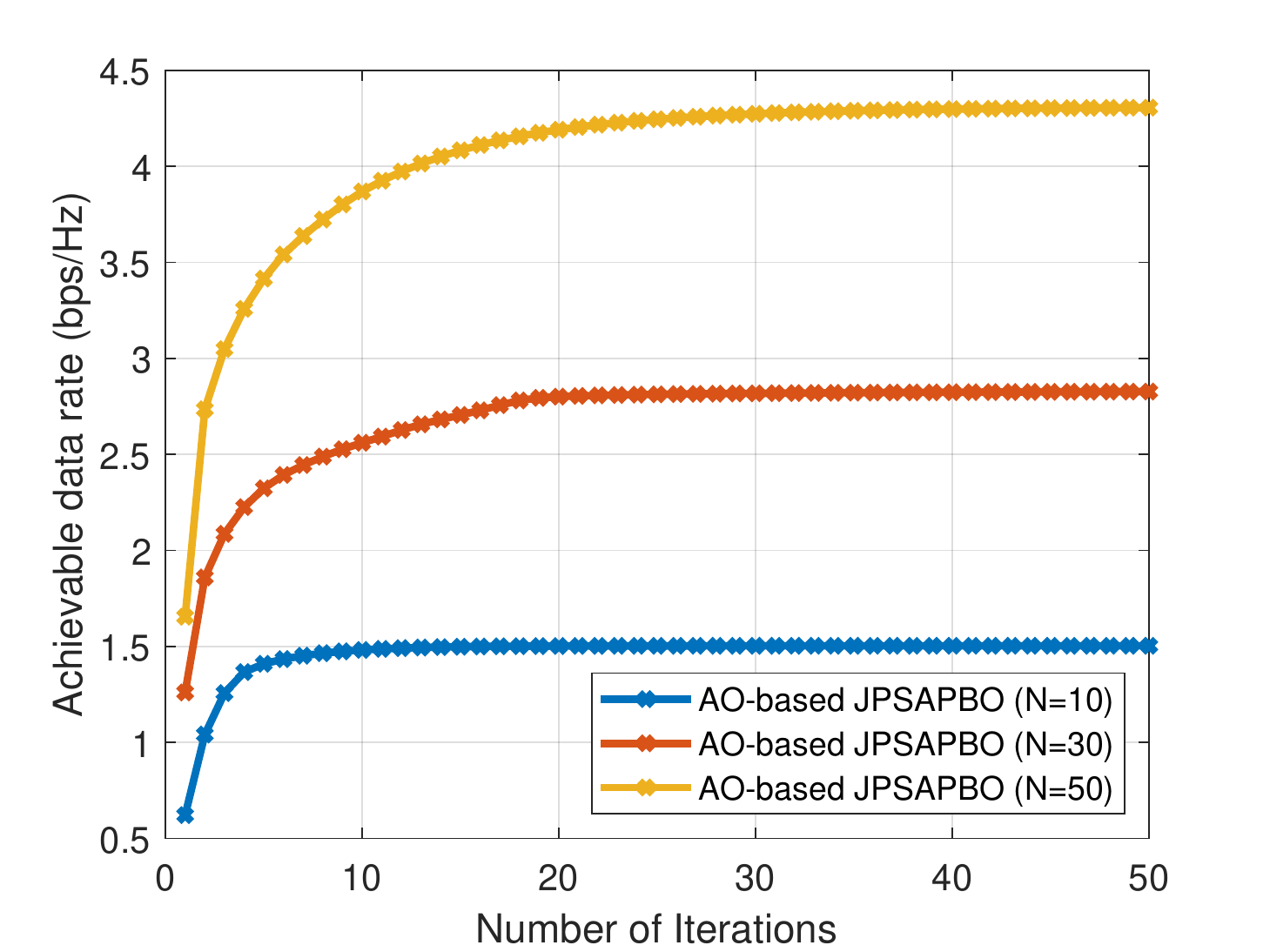}}
    \caption{Convergence behavior of the AO-based JPSAPBO algorithm under different number of reflective elements.}
 \label{fconvergence} 
\end{figure}
Before the performance comparison, we first study the convergence behavior of the proposed AO-based JPSAPBO algorithm in Fig. \ref{fconvergence}. Particularly, we plot the data rate versus the number of iterations for the various number of reflective elements, i.e., $N=10$, $N=30$ and $N=50$.
The curves are consistent to our expectation, as we can observe that the proposed AO-based JPSAPBO algorithm converges to a stationary point after a few iterations.
Besides, another observation is that more reflective elements only leads to a slightly slower convergence speed, e.g., for a large-scale IRS with $N=50$, the proposed AO-based JPSAPBO algorithm can also converge in $30$ iterations.

Then, in the following, we study the performance gain achieved by the proposed AO-based JPSAPBO algorithm. For comparison, we introduce three benchmark system design schemes to validate the performance as following
\begin{itemize}
    \item Fixed PS ratios: In this scheme, the PS ratios of PSRs are fixed (i.e., $\rho_k=0.5,\forall k$) and without optimized, while the ATB and PRB are need to be optimized;
    \item Random Phase Shifts: In this scheme, the phase shifts of IRS elements are random, and only the PS ratios at the PSRs side and ATB at the AP side need to be optimized;
    \item Without IRS: In this scheme, there is no IRS in the system, i.e., it is a conventional SWIPT, and only two sides, i.e., the AP side and the PSRs side need to be optimized. 
\end{itemize}

\begin{figure}[t]
    \centering
    \scalebox{0.55}{\includegraphics{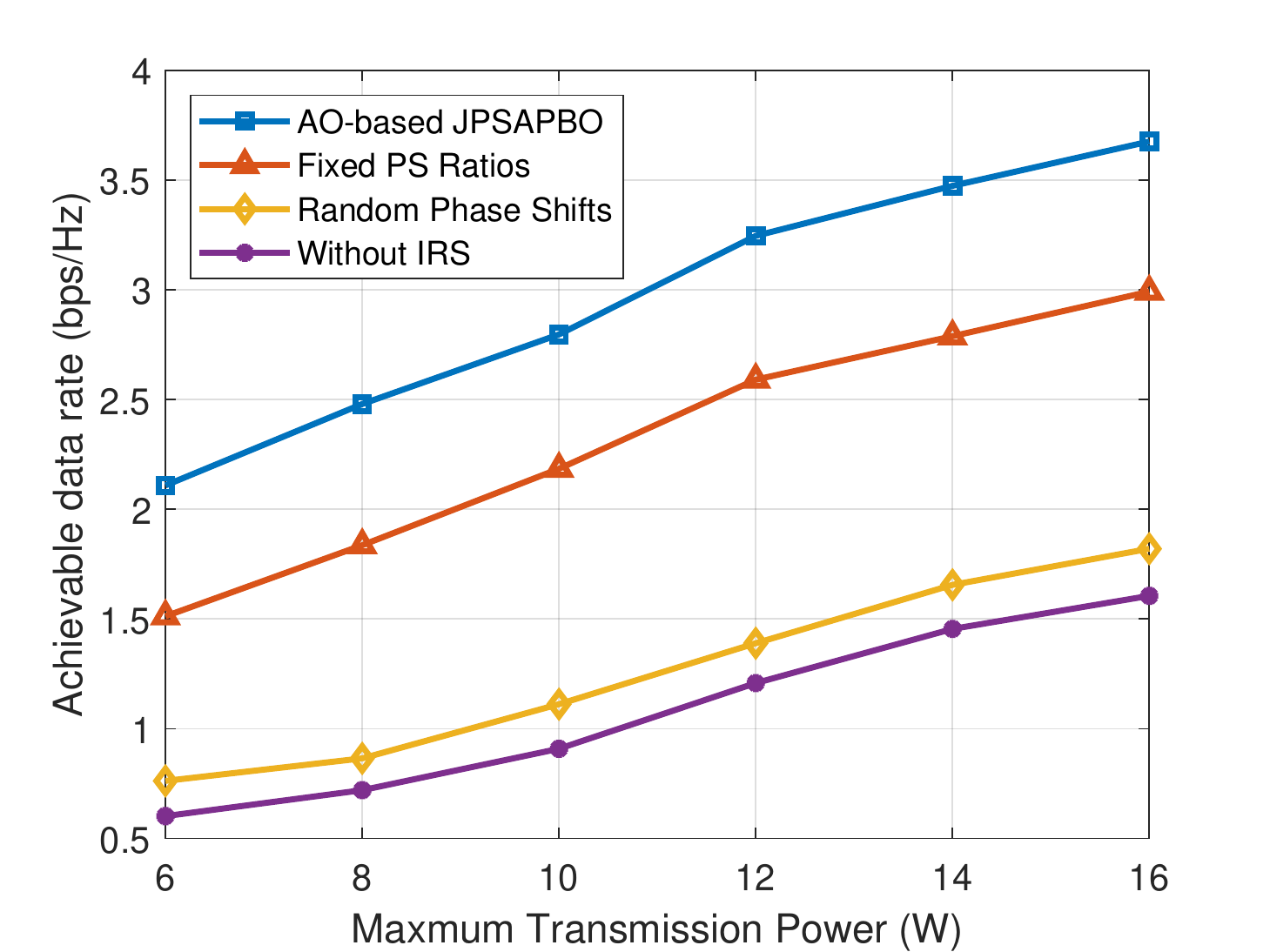}}
    \caption{Achievable data rate versus the maximum transmission power of the AP, i.e., $P_{\max}$.}
 \label{fpower} 
\end{figure}
As shown in Fig. \ref{fpower}, we investigate the data rate versus the maximum transmission power of AP, i.e., $P_{\max}$ for different schemes. 
The proposed AO-based JPSAPBO algorithm achieves a significant performance gain comparing with benchmark schemes. 
Meanwhile, it can be seen that the performance achieved by the Fixed PS ratios scheme is better than both the Random Phase Shifts scheme and the Without IRS scheme, this implies that IRS can enhance the performance of SWIPT system, and PRB of IRS need be carefully optimized to achieved a higher data rate.  

\begin{figure}[t]
    \centering
    \scalebox{0.55}{\includegraphics{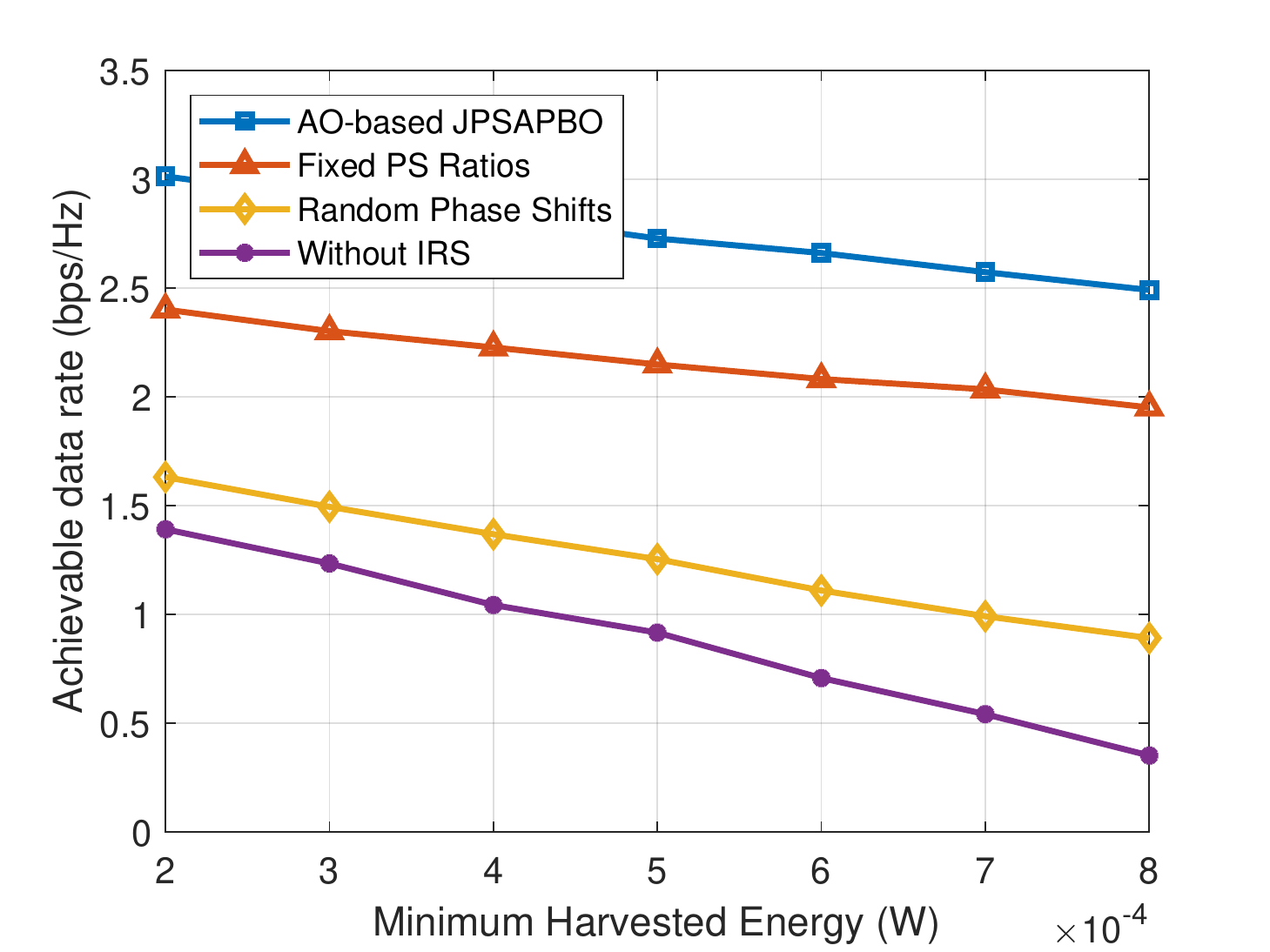}}
    \caption{Achievable data rate versus the minimum harvested energy requirement of PSRs, i.e., $\mathbf {\bar e}_k, \forall k$.}
 \label{fharvestedpower} 
\end{figure}
As shown in Fig. \ref{fharvestedpower} illustrates the data rate achieved by the proposed AO-based JPSAPBO algorithm and benchmark schemes over the minimum harvested power requirement of PSRs, i.e., 
${\mathbf{\bar e}}_k,\forall k$. 
As we can see, the proposed AO-based JPSAPBO algorithm in this paper outperforms the benchmarks significantly. 
In addition, with the increased minimum harvested energy, the gap between the IRS-related schemes and the Without IRS scheme becomes larger, and this indicates that the systems with IRS are more robust against the minimum harvest energy comparing with the system without IRS.

\begin{figure}[t]
    \centering
    \scalebox{0.55}{\includegraphics{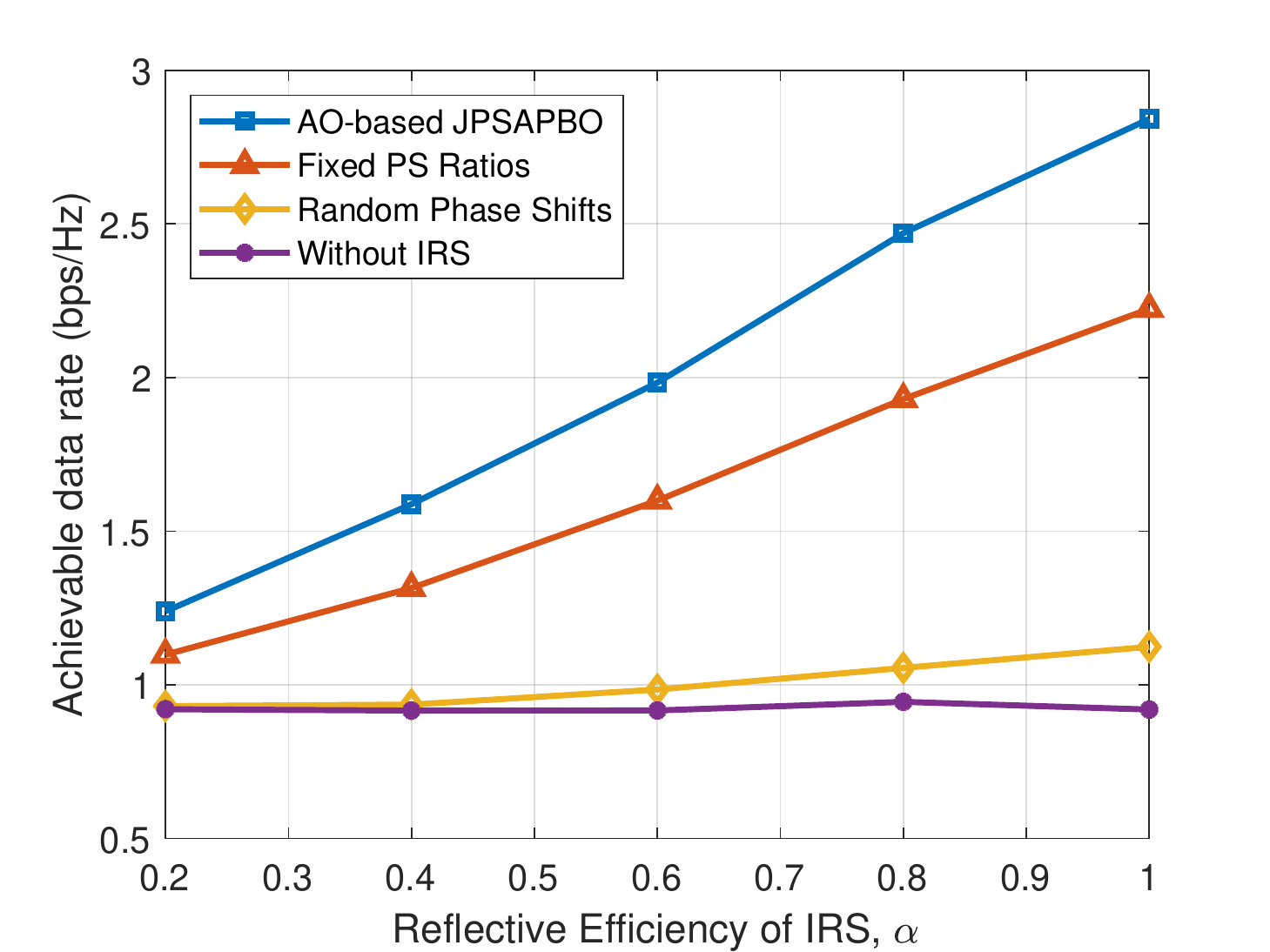}}
    \caption{Achievable data rate versus the reflective efficiency of IRS, i.e., $\alpha$.}
 \label{freflectiveefficiency} 
\end{figure}

As shown in Fig. \ref{freflectiveefficiency}, we study the achievable data rate versus the reflective efficiency of IRS, i.e., $\alpha$. As we can see from the figure, for all IRS reflective efficiencies, the proposed AO-based JPSAPBO algorithm outperforms other benchmarks considerably. Additionally, the Fixed PS Ratios scheme also achieves a better performance gain than both the Random Phase Shifts and the Without IRS schemes. Finally, for the large reflective efficiency of IRS, i.e., $\alpha \ge 0.4$, the Random Phase Shifts scheme achieves a higher data rate compares with the Without IRS scheme, and this implies that the IRS can enhance the performance of the SWIPT system.

\begin{figure}[t]
    \centering
    \scalebox{0.55}{\includegraphics{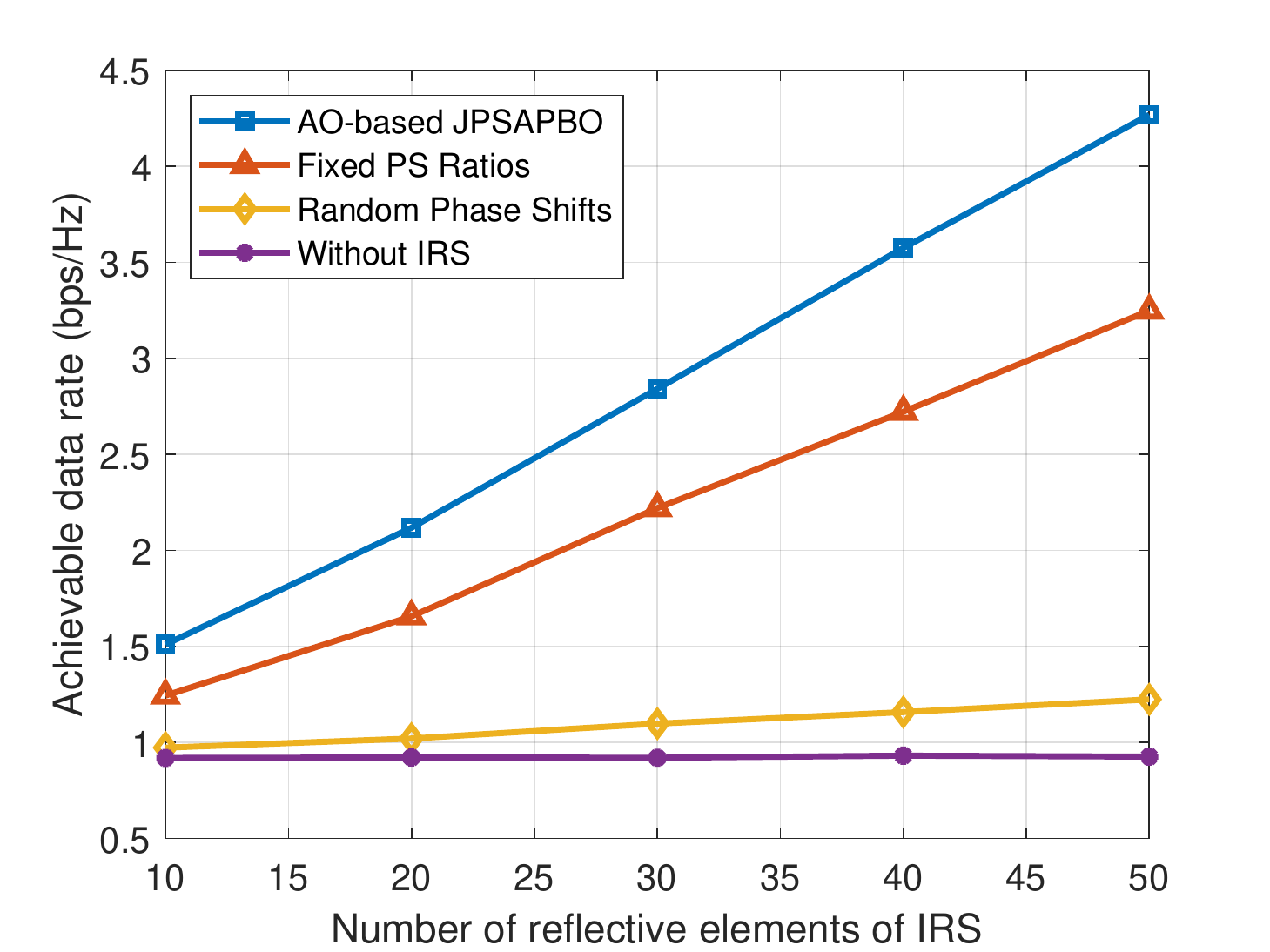}}
    \caption{Achievable data rate versus the number of reflective elements of IRS, i.e., $N$.}
 \label{fnumberirs} 
\end{figure}

As shown in Fig. \ref{fnumberirs}, we study that the achievable data rate versus the number of reflective elements of IRS, i.e., $N$. 
It can be observed that by increasing the number of reflective elements of IRS, the data rate achieved by all schemes except for Without IRS scheme increases monotonically as well, and the proposed AO-based JPSAPBO algorithm outperforms other three benchmarks significantly. In addition, the Fixed PS ratios scheme significantly outperforms both the Random Phase Shifts and the Without IRS schemes, especially for large value of reflective elements, which is due to the fact that PRB of IRS has been carefully optimized. Finally, the Random Phase Shifts scheme also achieves a higher rate than the Without IRS scheme, especially for large number of elements, and this implies that IRS leads to a significant performance gain, even though PRB of IRS has not been optimized.


\section{Conclusion}
In this paper, we investigated the achievable data rate maximization problem in the IRS-assisted SWIPT system with multiple integrated PSRs, which can perform both ID and EH. 
To address this unsolved problem, we proposed a joint optimization framework. Particularly, we decoupled the problem and decomposed it into several sub-problems. Then, we proposed the AO-based JPSAPBO algorithm to solve the sub-problems in an AO manner, and obtain the optimal solutions in nearly closed-forms iteratively.
Simulation results indicated that the proposed  AO-based JPSAPBO algorithm achieved a substantial performance gain and the IRS can significantly enhance the performance.


\appendices
\section{Proof of Lemma}
\subsection{Proof of Lemma \ref{l1}}\label{app1}
Clearly, Problem \ref{eq16} is equivalent to Problem \ref{eq8} with respect to $\rho$, and $\rho$ only exists in the last term of $f_1$. Therefore, we recast $f_1$ as  $f_1\left(\rho,\mathbf W^{\left(t\right)},\Theta^{\left(t\right)},\mathbf {\bar U}^{\left(t+1\right)} \right) \triangleq \operatorname{Tr}\left(\mathcal{F}\left(\rho\right)\right)+ \operatorname{Const}\left(\mathbf {\bar U} \right)$, and where
\begin{align}
    \mathcal F\left(\rho\right)\triangleq\sum_{k=1}^K{\frac{\mathcal A _k}{\mathcal B _k+\frac{\mathcal C_k}{\rho_k}}},
\end{align}
where $\mathcal A _k \triangleq \mathbf {\bar U}_k{\mathbf H_k^{\operatorname{H}}}{\mathbf W_k} \mathbf W_k^{\operatorname{H}}\mathbf H_k$, $\mathcal B _k\triangleq \sum_{i = 1}^K {{\mathbf H_k^{\operatorname{H}}}{\mathbf W_i}\mathbf W_i^{\operatorname{H}}\mathbf H_k}  + \sigma _k^2\mathbf I_{M_u}$ and $\mathcal C _k\triangleq{\delta _k^2}\mathbf I_{M_u}$. 
Since the PS ratio of each PSR is independent, the second-order derivative of $\mathcal F\left(\rho\right)$ as $\frac{{{\partial ^2}\mathcal F\left(\rho\right)}}{{{\partial {{\rho _i}}}{\partial {\rho j}}}} = 0, \; \forall i  \ne j$ holds. 
Therefore, the Hessian matrix of $\mathcal F\left(\rho\right)$ can be expressed as 
\begin{align}
    \frac{\partial ^2 \mathcal F\left(\rho\right)}{\partial \rho ^2} = \operatorname{diag}\left\{\frac{\partial ^2 \mathcal F\left(\rho\right)}{\partial \rho_1 ^2}, \frac{\partial ^2 \mathcal F\left(\rho\right)}{\partial \rho_2 ^2},\cdots,\frac{\partial ^2 \mathcal F\left(\rho\right)}{\partial \rho _K^2}\right\},
\end{align}
each element of the diagonal matrix is determined as 
\begin{align}
    \frac{\partial ^2 \mathcal F\left(\rho\right)}{\partial \rho _k^2}=\frac{-2\mathcal A_k\mathcal B_k\mathcal C_k}{\left(\mathcal B_k\rho_k+\mathcal C_k\right)^3}, \;\forall k.\label{e16}
\end{align}
It is clear that $\mathcal A _k\succeq 0$, $\mathcal B _k\succeq 0$ and $\mathcal C _k\succeq 0, \forall k$ are hold. Therefore, $\frac{\partial ^2 \mathcal F\left(\rho\right)}{\partial \rho _k^2} <0$ is always satisfying, which leads to that $\mathcal F\left(\rho\right)$ is a concave function on $\rho$. 
Moreover, we have the first-order derivative of $\mathcal F\left(\rho\right)$ as $\frac{\partial  \mathcal F\left(\rho\right)}{\partial \rho _k}=\frac{\mathcal A_k \mathcal C_k}{\left(\rho_k \mathcal B_k+\mathcal C_k\right)^2}>0$, $\forall k$. Therefore, $\mathcal F\left(\rho\right)$ is a monotonous increasing function. Similarly, it can be readily verified that the objective function of Problem \ref{eq16} is also concave and monotonously increasing on $\rho$. The proof of the lemma is completed.

\subsection{Proof of Lemma \ref{l2}}\label{app2}
Clearly, the value of $\rho_k$, $\forall k$ should be bounded by $\rho_k^{\min}\le\rho_k\le\rho_k^{\max}$, where $\rho_k^{\min}>0$ ensures for satisfying the minimum value of PS ratio in constraint \eqref{4d}, and $\rho_k^{\max}=1 - \frac{{{\mathbf {\bar e}_k}}}{{{\eta _k}\operatorname{Tr}\left( {\sum_{i = 1}^K {{\mathbf H_k^{\operatorname{H}}}{\mathbf W_i}\mathbf W_i^{\operatorname{H}}\mathbf H_k} } \right)}}$ ensures the harvested energy is larger than the minimum EH requirement in constraint \eqref{4b}. Moreover, according to Lemma \ref{l1}, we know the objective function is monotonous increasing with respect to $\rho$, hence we have 
\begin{align}
    {\rho _k^{\operatorname{\left(t+1\right)}}} \triangleq \max\left\{ \rho_k^{\operatorname{low}}\triangleq \max\left\{ 0,{\rho _k^{\min }}\right\},\rho_k^{\operatorname{up}}\triangleq\min
    \left\{  {\rho _k^{\max },1} \right\} 
    \right\}.
\end{align}
Since $\frac{{{\mathbf {\bar e}_k}}}{{{\eta _k}\operatorname{Tr}\left( {\sum_{i = 1}^K {{\mathbf H_k^{\operatorname{H}}}{\mathbf W_i}\mathbf W_i^{\operatorname{H}}\mathbf H_k} } \right)}}  > 0$ holds, we have $\rho _k^{\operatorname{up} }=\rho _k^{\max}$, $\forall k$. Therefore, when $\rho_k^{\operatorname{up}} > \rho_k^{\operatorname{low}}=0$, Problem \ref{eq16} is feasible. This completes the proof of Lemma \ref{l2}.

\subsection{Proof of Lemma \ref{l3}}\label{app4}
It is clear that the sequence of feasible solutions of ATB $\left\{\mathbf{W}^{\left(n\right)},\forall n=1,2,\cdots,\hat n\right\}$ are the optimal solutions of Problem \ref{eq25}, since they satisfy the KKT conditions of Problem \ref{eq25}. As following, we prove that the converged solution of ATB, i.e., $\mathbf{W}^{\left(\hat n\right)}$ also satisfies the KKT conditions of Problem \ref{eq21}. By defining the corresponding Lagrangian function of Problem \ref{eq21} as following
\begin{align}
    \mathcal{\bar L}\left(\mathbf {W}, \bar\tau, \bar\mu\right)&=\sum_{k=1}^K{\operatorname{Tr}\left(\mathbf {W}_k^{\operatorname{H}}\mathbf {A}
    \mathbf {W}_k\right)} - 2\sum_{k=1}^K{\operatorname{Tr}\left(
    \mathbf{W}_k^{\operatorname{H}}\mathbf {S}_k\right)}\notag\\
    &+\bar\tau\left(\sum_{k=1}^K{\operatorname{Tr} \left(\mathbf {W}_k^{\operatorname{H}} \mathbf {W}_k \right)}-P_{\max}\right)\notag\\
    &-\sum_{k=1}^K{\bar\mu_k\left( \operatorname{Tr}\left (\sum_{i=1}^K{
    {\mathbf {W}_i^{\operatorname{H}}}
    {\mathbf {B}_k}\mathbf {W}_i }\right )-{\mathbf {\dot e} _k}\right)}.
\end{align}
Therefore, the KKT conditions of Problem \ref{eq21} can be represented as 
\begin{align}
    \operatorname{Tr}\left(\left({\mathbf {A} +  \bar\tau \mathbf {I}_{M_b}}\right)\mathbf {W}_k-{\bar\mu_k\mathbf {B}_k}{ \mathbf {W}_k}-
    \mathbf {S}_k \right)&= 0,\forall k;\notag \\
    \bar\tau\left( \sum_{k=1}^K{\operatorname{Tr}\left(\mathbf {W}_k^{H}  \mathbf {W}_k \right)}-P_{\max}\right)&= 0;\notag\\
    \bar\mu_k\left( \operatorname{Tr}\left (\sum_{i=1}^K{
    {\mathbf {W}_i^{\operatorname{H}}}
    {\mathbf {B}_k}\mathbf {W}_i }\right )-{\mathbf {\dot e} _k}\right)&=0,\forall k.
\end{align}
With the converged solution $\mathbf{W}_k^{\left(\hat n\right)},\forall k$, it can be readily checked that there must exist the corresponding Lagrangian multipliers, i.e., $\bar\tau^{\left({\hat n}\right)}$ and $\bar\mu_k^{\left({\hat n}\right)},\forall k$,  for ensuring that the above KKT conditions of Problem \ref{eq21} are satisfied. Hence, the proof of Lemma \ref{l3} is complete.

\subsection{Proof of Lemma \ref{lemma5}}\label{applemma5}
By denoting $\bar\chi_k\ge0,\forall k$ and $\zeta_n\ge0,\forall n$ as the dual variables associated with the constraints \eqref{e46} and \eqref{e48}, the Lagrange function of Problem \ref{eqbeforesca} can be expressed as 
\begin{align}
    \mathcal{L}\left(\theta,\bar\chi, \zeta\right)=f_7\left(\theta\right)&-\sum_{k=1}^K{\bar\chi_k\left(\theta^{\operatorname{H}}\mathbf{\bar{J}}_k\theta+2\Re\left\{\theta^{\operatorname{H}}\lambda_k^*\right\}-{\mathbf {\dddot e}_k}\right)}\notag\\
    &+\sum_{n=1}^N{\zeta_n\left(\left|\theta_n\right|-\frac{1}{\alpha}\right)},
\end{align}
and the KKT conditions are given as 
\begin{align}
    \nabla_{\theta}{f_7\left(\theta\right)}-\sum_{k=1}^K{2\chi_k\left(\theta^{\operatorname{H}}\mathbf{\bar{J}}_k+\lambda_k^*\right)}
    +\sum_{n=1}^N{\zeta_n\nabla_{\theta}{\left(\left|\theta_n\right|\right)}}&=0;\notag\\
    \sum_{k=1}^K{\bar\chi_k\left(\theta^{\operatorname{H}}\mathbf{\bar{J}}_k\theta+2\Re\left\{\theta^{\operatorname{H}}\lambda_k^*\right\}-{\mathbf {\dddot e}_k}\right)}&=0;\notag\\
    {\zeta_n\left(\left|\theta_n\right|-\frac{1}{\alpha}\right)}&=0,\forall n.
\end{align}
We assume that the final converged solution of Algorithm \ref{a2} is $\theta^{\left({\hat q}\right)}$. According to the properties of SCA approach, the constraints \eqref{e46} and \eqref{e47} have the same gradient value at point $\theta=\theta^{\left({\hat q}\right)}$. Meanwhile, according to the properties of MM approach, the same gradient of functions $f_7\left(\theta\right)$ and $g\left(\theta|\theta^{\left({\hat q}\right)}\right)$ achieved at point $\theta^{\left({\hat q}\right)}$. Additional, the constant modulus constraint is ensured by employing the updating rule in \eqref{e54}. Therefore, with the final converged solution of Algorithm \ref{a2}, i.e., $\theta^{\left({\hat q}\right)}$, there must exist the corresponding multipliers $\bar\chi^{\left({\hat q}\right)}$ and $\zeta^{\left({\hat q}\right)}$ for guaranteeing that the KKT conditions of Problem \ref{eqbeforesca} are satisfied. The proof is completed.

%
\section{Connection with MMSE receiver}\label{app3}
We consider the MMSE matrix of each PSR denoted by $\mathbf {\hat L} _k,\forall k$, and the decoded signal can be wittered as  
\begin{align}
    \mathbf{\hat s} _k&=\frac{1}{\sqrt{\rho_k}}\mathbf {\hat L} _k^{\operatorname{H}} \mathbf{y}_k^{\operatorname{ID}}\notag\\
    &=\mathbf {\hat L} _k^{\operatorname{H}}\sum_{i=1}^K{\mathbf {H} _k\mathbf W_i \mathbf s_i}+\mathbf {\hat L} _k^{\operatorname{H}} \mathbf n_k +\frac{1}{\sqrt{\rho_k}}\mathbf {\hat L} _k^{\operatorname{H}} \mathbf z_k,\forall k.
\end{align}
Consequently, the MSE can be expressed as 
\begin{align}
    \operatorname{MSE}_k&=\mathbb{E}\left\{\left(\mathbf {\hat s}_k-\mathbf s_k\right)\left(\mathbf {\hat s}_k-\mathbf s_k\right)^{\operatorname{H}}\right\}\notag\\
    &=\mathbf I_{M_u} + \sum_{i=1}^{K}{\mathbf {\hat L} _k^{\operatorname{H}} \mathbf {H}_k^{\operatorname{H}}\mathbf {W} _i\mathbf {W} _i^{\operatorname{H}}\mathbf {H} _k \mathbf {\hat L} _k} -\mathbf {\hat L} _k^{\operatorname{H}} \mathbf {H}_k^{\operatorname{H}}\mathbf {W} _k\notag\\
    &- \mathbf {W} _k^{\operatorname{H}}\mathbf {H}_k\mathbf {\hat L} _k+\mathbf {\hat L} _k^{\operatorname{H}} \left(\sigma_k^2+\frac{\delta_k^2}{\rho_k}\right) \mathbf {\hat L} _k,\forall k.
\end{align}
Since the maximum transmission power at AP and the minimum EH requirement constraints of PSRs are independent of ID operation, and with fixed ATB and PS ratios, the sub-problem for solving the decoding matrix is given as
\begin{align}
    \mathop{\min}_{\mathbf {\hat L} _k} \operatorname{MSE}_k, \;\forall k.
\end{align}
The closed-form solution of above problem is 
\begin{align}
    {\mathbf {\hat L}_k} = \frac{{\mathbf H_k^{\operatorname{H}}}{\mathbf W_k}}{\sum_{i = 1}^K {{\mathbf H_k^{\operatorname{H}}}{\mathbf W_i}\mathbf W_i^{\operatorname{H}}\mathbf H_k}  + \sigma _k^2\mathbf I_{M_u} + {\delta _k^2}/{\rho _k}\mathbf I_{M_u}}, \;\forall k.
\end{align}

As aforementioned, this appendix implies that there is a connection between the FPT and the MMSE matrices. In fact, MMSE decoding matrix can be interpreted as a special form of the FPT receiver. For details, please refer to \cite{shen2018fractional,shen2018fractional2,8862850}.



\ifCLASSOPTIONcaptionsoff
  \newpage
\fi

\bibliography{ref}
\end{document}